\documentclass[12pt,dvipsnames]{article}

% This first part of the file is called the PREAMBLE. It includes
% customizations and command definitions. The preamble is everything
% between \documentclass and \begin{document}.

\usepackage[margin=1in]{geometry}  % set the margins to 1in on all sides
\usepackage{graphicx}              % to include figures
\usepackage{amsmath}               % great math stuff
\usepackage{amsfonts}              % for blackboard bold, etc
\usepackage{amsthm}                % better theorem environments
\usepackage{breqn}
\usepackage{amsmath, latexsym, amsfonts, amssymb}
\usepackage{mathrsfs,enumerate}
\usepackage{graphics,epsfig,color}
\usepackage{xcolor}
\usepackage{subfigure}
\usepackage{wrapfig}
\usepackage{cite}
\usepackage{caption}
\usepackage{authblk}

%\usepackage{graphicx}

% various theorems, numbered by section
\usepackage{calrsfs}
\usepackage{bm}

\def\non{\nonumber}

\newtheorem{theorem}{Theorem}[section]
\newtheorem{lemma}[theorem]{Lemma}
\newtheorem{prop}[theorem]{Proposition}

  % for bolding symbols
 
      % for Real numbers
      % for Integers

\title{Large time asymptotic of heavy tailed renewal processes}

\author[1]{Hiroshi Horii}
\author[1]{Rapha\"el Lefevere}
\author[2]{Takahiro Nemoto}

\affil[1]{Universit\'e de Paris, Laboratoire de Probabilit\'es, Statistiques et Mod\'elisation, UMR 8001, F-75205 Paris, France}
\affil[2]{Sorbonne Universit\'e, INSERM, CNRS, Institut de la Vision, 75012 Paris, France}

\date{}

\begin{document}

\maketitle

\begin{abstract}

We study the large-time asymptotic of renewal-reward processes with a heavy-tailed waiting time distribution. It is known that the heavy tail of the distribution produces an extremely slow dynamics, resulting in a singular large deviation function. When the singularity takes place, the bottom of the large deviation function is flattened, manifesting anomalous fluctuations of the renewal-reward processes. In this article, we aim to study how these singularities emerge as the time increases. Using a classical result on the sum of random variables with regularly varying tail, we develop an expansion approach to prove an upper bound of the finite-time moment generating function for the Pareto waiting time distribution (power law) with an integer exponent. We  perform numerical simulations using Pareto (with a real value exponent), inverse Rayleigh and log-normal waiting time distributions, and demonstrate  similar results are anticipated in these  waiting time distributions.

\end{abstract}

\section{Introduction}

Let us consider a random walk with arbitrary distributions of jump lengths and waiting times. 
Both waiting times $(\tau_i)$, $i=1,2,...$ and jump lengths $(X_i)$, $i=1,2,...$ are {\it i.i.d.} whose probability densities are denoted by respectively $p$ (such that $\tau>0$ a.s. and $\mathbb{E}\tau<\infty$) and $q$. 
The {\it renewal-reward process} $R(t)$ is then defined as
\begin{equation}
	R(t)= X_1+\cdots + X_n, 
\end{equation}
where the number of jumps $n$ satisfies
\begin{equation}
 \tau_1+\cdots+\tau_n\leq t<\tau_1+\cdots+\tau_{n+1}.
 \end{equation}
%\begin{equation}
%	\label{eq:as-tis2}
%	\mathbb{E}\tau<\infty.
%\end{equation}
A simple example of the renewal-reward process is a counting process $N_t$ defined by $q(x)=\delta(x-1)$, (so that $X_i$ only takes the value 1), {\it i.e.} $N_t$ corresponds to the number of jumps until time $t$.

The well-known fields that exploit the renewal-reward process (and the theory related to this process) are, among others, the actuarial science, where models describing an insurer's vulnerability to ruin are studied (known as ruin theory) \cite{b22}, the queueing theory that studies the queue length and waiting time in telecommunication, traffic- and industrial engineering \cite{b23}, and epidemiology using a renewal-reward process for estimating the basic quantity of virus spreading \cite{b11,b12}.  A number of studies have also reported various physical and biological phenomena which are described by renewal-reward processes  \cite{klafter1980derivation, tanushev1997central, berkowitz2006modeling, b5}.

In this article, we study the fluctuations of the renewal-reward process $R(t)$.  
Under mild assumptions \cite{b19}, the family of random variables $(R(t)/t)_{t>0}$ satisfies a large deviation principle 
\begin{equation}
	\mathbb P \left (\frac {R(t)}{t}\simeq s \right ) \sim e^{- t I(s)}, \quad {\mathrm as}\quad t\to\infty,
\end{equation}
where the non zero function $I$ is called the rate function. See \cite{b19} for more mathematical formulation of the large deviation principle. In good cases \cite{b19}, the rate function $I$ is equal to the Legendre transform of the scaled cumulant generating function (CGF)
\begin{equation}
	\phi(h)=  \lim_{t\rightarrow \infty}\frac{1}{t} {\rm log}\mathbb{E}[e^{hR(t)}],
\end{equation}
{\it i.e.}, $I(s) = \sup \{ h s - \phi(h) \}$. Here $h\in\mathbb{R}$ is called a biasing field.  

An interesting point to keep in mind when studying renewal-reward processes is that the process is in general not Markov and  a strong time correlation in the dynamics can be introduced by choosing a heavy-tailed probability density as the waiting time distribution. One can intuitively expect that a single waiting time could become dominant in the dynamics because of the heavy tail in the waiting time distribution, leading to an unusually high occurrence of certain rare events (see, for example, a single-big jump principle \cite{bouchaud1990anomalous, kutner2002extreme, de2013asymmetric, b6}).
Considering renewal-reward processes with heavy-tailed waiting-time distributions, the presence of a singular behaviour in the rate function and the CGF was established in \cite{b19,b20,b14} related to these rare events. For example, let us consider the counting process $N_t$ with waiting times distributed according to a Pareto's law with a parameter $m>2$:
\begin{eqnarray}
	p(t):&=&\left\{\begin{array}{ll}0 & t\leq 0\\ 
	\frac{(m-1)}{(1+t)^m} & t > 0
	\end{array}
\right.
\label{eq:pareto}
\end{eqnarray}
or equivalently by its cumulative distribution $F$
\begin{eqnarray}
\non\\
	F(t):=\mathbb P[\tau\leq t]&=&\left\{\begin{array}{ll}0 & t\leq 0\\ 
	1-\frac{1}{(1+t)^{m-1}}& t >0.
\end{array}
\right.
\end{eqnarray}
Then the result of \cite{b19} implies that
\begin{eqnarray}
\label{eq:affinepart}
&&I(x)=0\ \ \ \ \ \ \ (0\leq x\leq \mu)\non\\
 &&\phi(h)=0\ \ \ \ \ \ \ (h\leq 0),
\end{eqnarray}
where $\mu=1/\mathbb E[\tau]$. 
The fact that the rate function $I$ takes the value zero below $\mu=1/\mathbb E[\tau]$, indicates an unusually high occurrence of the rare events where $N_t$ is below its average value $\mu=1/\mathbb E[\tau]$. Throughout this article, we call this singularity an {\it affine part}.

Our goal in this article is to study the finite time asymptotic of CGF when the affine part emerges.  
The finite time asymptotics of the CGF and of the rate function in the context of renewal-reward processes have been studied by Tsirelson \cite{b2} under the assumption that the rewards are centred, namely $\mathbb E[X_i]=0, i\in\mathbb N$.  Tsirelson theorem however cannot be applied to the counting process $N_t$ as it satisfies $X_i=1, i\in\mathbb N$. The affine part is indeed not observed in the Tsirelson theorem \cite{b2}. To approach this problem, we rely on a classical result on the sum of random variables with regularly varying tails that can be found in Feller's book \cite{feller}.
The behaviour for $t\to\infty$ of $\mathbb P[N_t\leq k]$ is determined by 
 \begin{equation}
\lim_{t\to\infty}t^{m-1}\mathbb P[N_t \leq k-1]=k.
\end{equation}
for any $k\geq 1$. 
Using this theorem, we study the asymptotics in time of the cumulant generating function.
The strategy is that we first expand the moment generating function of $N_t$ as the infinite sum of $P[N_t <k]$ ($k=1,2,...$) and then use this theorem in each term. Technically, the key in this step is to justify the fact that the infinite sum and the large $t$ limit  can be exchanged. This leads to the asymptotic form of the upper bound for the finite-time moments generating function (MGF) (Theorem \ref{thm:normconv_0} and Theorem \ref{thm:normconv}) for Pareto waiting time distribution with an integer exponent. In numerical simulations we then test the validity of this theorem using Pareto distribution with a real value exponent, inverse Rayleigh distribution and log-normal distribution, demonstrating the extent of this theorem beyond what we study mathematically in this article.

The structure of this article is the following:  In Section \ref{seq:results}, we will state our results on the speed of convergence of the moments generating function (MGF).  
In Section \ref{seq:unibound}, we will show how to prove this theorem and derive the asymptotic form of the MGF with Pareto distribution.
In Section \ref{seq:appendixa}, we will perform numerical simulations and study if the theorem in Section \ref{seq:results} can be extended to the problems with different heavy-tailed waiting-time distributions. 
In Section \ref{sec:numerical_ratefunction}, we numerically study the rate function of $N_t$, observe a general asymptotic form, and discuss the relation with the results in Section \ref{seq:results}.
Finally in Section \ref{seq:conclusion}, we will conclude this article.

\section{Results}
\label{seq:results}

Our goal is to study for $h<0$ :
\begin{equation}
	M(t,h):=\mathbb E[e^{hN_t}]=\sum_{k=0}^\infty e^{hk}\mathbb P[N_t=k].
\label{defM}
\end{equation}
We observe that, since obviously $0\leq \mathbb P[N_t=k]\leq 1$, the series is normally convergent for any $h<0$, as a function defined for $t\in[0,\infty[$. Let 
$$
S_k=\tau_1+\ldots+\tau_k
$$
Since for any $t\geq 0$, any $k\in\mathbb N\backslash\{0\}$, 
$$
\mathbb P[N_t=k]=\mathbb P[N_t\leq k]-\mathbb P[N_t\leq k-1]
$$ 
and any $k\in\mathbb N$ :
$$
\mathbb P[N_t\leq k]=\mathbb P[S_{k+1}\geq t],
$$
we obtain :
$$
M(t,h)=\sum_{k=1}^\infty (e^{h(k-1)}-e^{hk})\mathbb P[S_k\geq t].
$$
Introducing $z=e^h$, this may be rewritten :

$$
M(t,h)=\frac{1-z}{z}\sum_{k=1}^\infty z^k \mathbb P[S_k\geq t]
$$
The behaviour for $t\to\infty$ of $P[S_k\geq t]$ when $S_k$ is the sum of independent variables distributed with a regularly varying distribution may be found in Feller's book \cite{feller} Chapter VIII.9 p. 278.  The Pareto distribution with $m\geq 3$ enters this class and Feller's result implies that for any $k\geq 1$ :
\begin{equation}
	\lim_{t\to\infty}t^{m-1}\mathbb P[S_k>t]=k.
\label{feller1}
\end{equation}

Thus, if one is allowed to take the limit inside the above series, one gets the following

\begin{theorem}\label{thm:normconv_0} 
Let $(N_t)_{t\geq0}$ the counting process with waiting times distributed according to a Pareto's law with an integer parameter $m\geq 3$, then for any $h<0$ :
$$
\lim_{t\to\infty}t^{m-1} M(t,h)=\frac 1{1-e^h}.
$$

\end{theorem}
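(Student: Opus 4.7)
The plan is to apply the dominated convergence theorem to the series representation
\[
M(t,h)=\frac{1-z}{z}\sum_{k=1}^\infty z^k \mathbb{P}[S_k\geq t]
\]
established just before the statement, multiplied throughout by $t^{m-1}$. By Feller's limit (\ref{feller1}), the $k$-th term satisfies $t^{m-1} z^k \mathbb{P}[S_k\geq t] \to k z^k$ as $t\to\infty$. If the interchange of limit and sum can be justified, then
\[
\lim_{t\to\infty}t^{m-1}M(t,h) \;=\; \frac{1-z}{z}\sum_{k=1}^\infty k z^k \;=\; \frac{1-z}{z}\cdot\frac{z}{(1-z)^2} \;=\; \frac{1}{1-e^h},
\]
which is the desired identity. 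The only obstacle is the legitimacy of the exchange, and I would reduce this to exhibiting a sequence $C_k$ with $t^{m-1}\mathbb{P}[S_k\geq t]\le C_k$ for all $t\ge 0$ and $\sum_{k\geq 1} z^k C_k<\infty$ for every $z=e^h\in(0,1)$.

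The main technical step, and the hard part, is to produce such a dominating sequence uniformly in $t$, since Feller's result is only pointwise in $k$. I would prove by induction on $k$ that $a_k:=\sup_{t\ge 0} t^{m-1}\mathbb{P}[S_k\ge t]\le k^m$. For $k=1$, the explicit Pareto tail gives $t^{m-1}\mathbb{P}[\tau_1\geq t]=\bigl(t/(1+t)\bigr)^{m-1}\leq 1$. For the induction step I would apply the elementary inclusion
\[
\{S_k\geq t\}\subseteq \{\tau_k\geq t/k\}\cup\{S_{k-1}\geq t(k-1)/k\},
\]
together with the bound $t^{m-1}(1+t/k)^{-(m-1)}=\bigl(t/(1+t/k)\bigr)^{m-1}\le k^{m-1}$, to obtain the recursion
\[
a_k\le \bigl(k/(k-1)\bigr)^{m-1}a_{k-1}+k^{m-1}.
\]
Setting $b_k=a_k/k^{m-1}$ collapses this to $b_k\le b_{k-1}+1$, whence $b_k\le k$ and $a_k\le k^m$ for all $k\ge 1$.

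With the uniform estimate $t^{m-1}\mathbb{P}[S_k\ge t]\le k^m$ at hand, the choice $C_k=k^m$ dominates termwise and the series $\sum_{k\geq 1}k^m z^k$ converges for every $|z|<1$, so dominated convergence permits the exchange of $\lim_{t\to\infty}$ with the sum. Combined with the termwise Feller limit and the power-series identity $\sum_{k\ge 1}kz^k=z/(1-z)^2$, this yields the theorem. Everything apart from the uniform-in-$k$ bound is routine; the recursion above is the one substantive ingredient.
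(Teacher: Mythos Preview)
Your argument is correct and is genuinely different from the paper's. Both proofs reduce the theorem to the normal convergence statement $\sum_{k\ge 1}e^{hk}\sup_{t\ge 0}t^{m-1}\mathbb{P}[S_k\ge t]<\infty$, but they establish this in very different ways. The paper proceeds analytically: it computes $M_1(t)$ explicitly by partial-fraction decomposition of the convolution integral, then inductively bounds the differences $M_n(t)-M_{n-1}(t)$ via a lemma of the same flavour (Lemma~\ref{thm:mnminusmn1}), and finally sums to get $\mathbb{P}[S_n\ge t]\le nM_0(t)+n^2 C_m\,d^m(d+t)^{-m}(1+\bar c/d)^{n-1}$. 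Your route is purely probabilistic: the inclusion $\{S_k\ge t\}\subset\{\tau_k\ge t/k\}\cup\{S_{k-1}\ge t(k-1)/k\}$ together with the scaling $t\mapsto t(k-1)/k$ immediately yields the recursion $a_k\le (k/(k-1))^{m-1}a_{k-1}+k^{m-1}$, whose solution $a_k\le k^m$ is enough to dominate.

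What each buys: your approach is noticeably shorter and, importantly, nowhere uses that $m$ is an integer---the partial-fraction step in the paper is precisely where integrality is needed---so combined with Feller's limit (which holds for any regularly varying tail) your bound actually proves Theorem~\ref{thm:normconv_0} for all real $m\ge 3$, settling the question raised in the Remarks after Theorem~\ref{thm:normconv}. On the other hand, the paper's analysis produces the sharper two-term uniform bound (\ref{unifbound}), with a leading term of the exact order $t^{-(m-1)}$ and an explicit correction of order $t^{-m}$; your cruder estimate $t^{m-1}\mathbb{P}[S_k\ge t]\le k^m$ suffices for the limit but does not yield this refined inequality, which the paper exploits later in Section~\ref{sec:numerical_ratefunction}.
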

The proof of the theorem boils down to proving that this exchange is justified and using $\sum_{k=1}^\infty kz^k=\frac z {(1-z)^2}$ for $0\leq z <1$.  The justification of the exchange of the series and the limit is guaranteed by (\ref{normconv}) in the next theorem.

\begin{theorem}\label{thm:normconv} 
Let $(N_t)_{t\geq0}$ the counting process with waiting times distributed according to a Pareto's law with an integer parameter $m\geq 3$, then for any $h<0$
\begin{equation}
	\sum_{k=1}^\infty e^{hk} \sup_{t\in [0,\infty[}t^{m-1}\mathbb P[S_k\geq t]<\infty.
\label{normconv}
\end{equation}
Moreover, for any $m\geq 3$, there exists $\bar c$ such that for any $h<0$, and for any $d>\frac{\bar c}{e^{-h}-1}$ :
\begin{equation}
	M(h,t)\leq \frac{1}{1-e^h}M_0(t)+C_m\frac{d^m}{(d+t)^m}\frac{\alpha(d) e^h}{(1-\alpha(d) e^h)^2}
	\label{unifbound}
\end{equation}
where $\alpha(d)=1+\frac{\bar c}{d}$.
\end{theorem}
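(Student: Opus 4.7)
The approach is to establish a pointwise upper bound on $\mathbb P[S_k \geq t]$ that separates into a leading term of order $k\,\bar F(t)$ (reproducing Feller's asymptotic from (\ref{feller1})) plus a correction that decays strictly faster than $\bar F(t)$ in $t$ and is summable in $k$ against $z^k = e^{hk}$. Inserted into the identity $M(t,h) = \frac{1-z}{z}\sum_{k\geq 1} z^k\,\mathbb P[S_k \geq t]$ and combined with the elementary $\sum_k k z^k = z/(1-z)^2$ and $\sum_k k(\alpha z)^k = \alpha z/(1-\alpha z)^2$, such a bound immediately reproduces the two terms of (\ref{unifbound}); the constraint $d > \bar c/(e^{-h}-1)$ is then nothing but $\alpha(d) e^h < 1$, which is what is needed for the auxiliary geometric series to converge.

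For the summability statement (\ref{normconv}), a simple union bound is already sufficient. Since $\{S_k \geq t\} \subseteq \{\max_i \tau_i \geq t/k\}$ one has
$$\mathbb P[S_k \geq t] \leq k\,\bar F(t/k) = \frac{k^m}{(k+t)^{m-1}},$$
hence $\sup_{t\geq 0} t^{m-1}\,\mathbb P[S_k \geq t] \leq k^m$. The series $\sum_{k\geq 1} e^{hk}\,k^m$ converges for every $h<0$, which proves (\ref{normconv}) and, via dominated convergence, yields Theorem \ref{thm:normconv_0}.

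For the refined inequality (\ref{unifbound}), I would combine a single-big-jump decomposition with a concentration estimate on a truncated sum. Split $\{S_k \geq t\}$ according to whether some $\tau_i$ is responsible for producing the large value: the first alternative contributes at most $k\,\bar F(t)$ by a union bound, which after summation against $k z^k$ yields the leading $M_0(t)/(1-e^h)$ piece. For the second alternative, introduce $\tilde\tau_i := \tau_i \wedge d$ and $\tilde S_k := \sum_i \tilde\tau_i$; since $\tilde\tau_i$ is bounded, all its exponential moments are finite, so one may apply a Chernoff bound $\mathbb P[\tilde S_k \geq t] \leq e^{-\lambda t}\,\mathbb E[e^{\lambda \tilde\tau_1}]^k$. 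Evaluating the truncated MGF (particularly clean when $m$ is an integer) and optimizing $\lambda$ of order $m/(d+t)$ converts the exponential factor $e^{-\lambda t}$ into the algebraic decay $d^m/(d+t)^m$, while producing a per-factor cost of the form $\alpha(d) = 1 + \bar c/d$ for some $\bar c$ depending only on $m$.

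The principal technical obstacle lies in the joint tuning of $d$ and $\lambda$: the correction must decay strictly faster than $\bar F(t) \sim t^{-(m-1)}$ so as to remain subleading with respect to the main term (hence the extra power in $(d+t)^{-m}$), while the per-jump cost $\alpha(d)$ must stay close enough to $1$ that the series $\sum_k (\alpha(d) e^h)^k$ converges for every $h<0$ once $d$ is large enough. This trade-off is precisely what dictates the explicit condition $d > \bar c/(e^{-h}-1)$, and it requires a careful analysis of the subleading behaviour of the truncated moment generating function near $\lambda = 0$, where the integrality of $m$ makes the relevant moment integrals elementary.
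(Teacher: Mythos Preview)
Your argument for (\ref{normconv}) via the inclusion $\{S_k\geq t\}\subseteq\{\max_i\tau_i\geq t/k\}$ is correct and in fact cleaner than the paper's route, which obtains (\ref{normconv}) only as a corollary of the sharper pointwise bound (\ref{Snbound}). So for the summability statement alone, your proof is fine and genuinely different.

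The proposal for (\ref{unifbound}), however, has a real gap in both of its two steps. First, the single-big-jump split does not produce the two pieces you describe. To extract the leading term $k\bar F(t)=kM_0(t)$ you must use the threshold $t$ itself, i.e.\ bound $\mathbb P[S_k\geq t,\max_i\tau_i\geq t]\leq k\bar F(t)$; but then on the complementary event one only knows $\tau_i<t$, not $\tau_i\leq d$, so one cannot replace $S_k$ by the truncated sum $\tilde S_k=\sum_i(\tau_i\wedge d)$ with a \emph{fixed} level $d$. If instead you split at level $d$, the big-jump piece becomes $k\bar F(d)$, which does not decay in $t$ at all. Second, and more fundamentally, Chernoff on a bounded variable yields exponential, not polynomial, decay in $t$: with $\lambda=m/(d+t)$ one has $e^{-\lambda t}=e^{-mt/(d+t)}\in[e^{-m},1]$, a constant, not $d^m/(d+t)^m$. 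To force $e^{-\lambda t}=(d/(d+t))^m$ one would need $\lambda=(m/t)\log((d+t)/d)$, which depends on $t$; then $\mathbb E[e^{\lambda\tilde\tau}]$ also depends on $t$ and cannot be written as a fixed per-jump factor $\alpha(d)=1+\bar c/d$. The ``careful analysis'' you allude to therefore cannot close as outlined.

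The paper's proof follows a completely different route: it works directly with $M_k(t)=\mathbb P[N_t=k]$ and the convolution recursion $M_k(t)=\int_0^t M_{k-1}(t-s)p(s)\,ds$, telescopes $M_n=M_0+\sum_{k=1}^n(M_k-M_{k-1})$, and proves by induction that $M_k-M_{k-1}\leq C_m\,d^m(d+t)^{-m}(1+\bar c/d)^{k-1}$. The base case $M_1-M_0$ and the inductive step both rest on explicit partial-fraction decompositions of $(1+s)^{-(m-1)}(1+t-s)^{-m}$ and $(d+s)^{-m}(1+t-s)^{-m}$, which is precisely where the integrality of $m$ is used. Summing the telescoped bound against $e^{hk}$ then gives (\ref{unifbound}) directly. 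Your probabilistic/Chernoff strategy, even if repaired, would be unlikely to reproduce the exact $(d+t)^{-m}$ shape and the clean $\alpha(d)=1+\bar c/d$; the paper's convolution-and-partial-fractions argument is tailored to produce exactly that.
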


\noindent {\bf Remark}
Note that the first term in (\ref{unifbound}) is $O(\frac 1{t^{m-1}})$ while the second term is $O(\frac 1{t^{m}})$. 
\begin{proof}
We introduce the notation $M_k(t)=\mathbb P[N_t=k]$ for $k\in\mathbb N$.
We note the two straightforward relations : for any $k\geq 1$
\begin{eqnarray}
	M_0(t)&=&1-F(t)\nonumber\\
	M_k(t) &=& \int_0^t M_{k-1}(t-s)p(s) ds
	\label{recursion1}
\end{eqnarray}
	We use the telescopic identity that holds for any $n\geq 1$
	\begin{eqnarray}
\label{eq:identity}
M_n(t)=M_0(t)+\sum^n_{k=1}(M_k(t)-M_{k-1}(t)).
\end{eqnarray}
which, together with Proposition \ref{thm:Msub1} below implies that for any integer $m\geq 3$ there exists $\bar c>0$ such that for any $d>1$, any $t\geq 0$ and for $n\geq 1$:
\begin{equation}
\mathbb P[N_t=n]= M_n(t)\leq M_0(t)+n C_m\frac{d^m}{(d+t)^m}\left[1+\frac{\bar c}{d}\right]^{n-1}
\label{nt}	
\end{equation}
and  
\begin{eqnarray}
	\mathbb P[S_n\geq t]&=&\mathbb P[N_t\leq n-1]\nonumber\\
	&=&\sum_{k=0}^{n-1}\mathbb P[N_t=k]\nonumber\\
	&\leq & n M_0(t)+n^2 C_m\frac{d^m}{(d+t)^m}\left[1+\frac{\bar c}{d}\right]^{n-1}.
\label{Snbound}
\end{eqnarray}
Thus for any $h<0$, and any $d>\frac{\bar c}{e^{-h}-1}$ the series (\ref{normconv}) converges and the bound (\ref{unifbound})holds.
\end{proof}

\noindent {\bf Remarks}

\noindent If $m\geq 3$ is not an integer, it is easy to see that
$$
\mathbb P[S_k\geq t]\leq \mathbb P[\tilde S_k\geq t]
$$ 
where $\tilde S_k=\tilde\tau_1+\cdots+\tilde\tau_k$ and $(\tau_k)_k$ is and i.i.d sequence such that
$$
{\mathbb P}[\tilde\tau_i>t]=\frac{1}{(1+t)^{\lfloor m\rfloor -1}}
$$
because we have that
$$
{\mathbb P}[\tau_i>t]\leq {\mathbb P}[\tilde\tau_i>t].
$$
Thus, using the same methods, we can conclude that 
$$
\lim_{t\to\infty} t^{\lfloor m\rfloor-1}M(t,h)=0
$$
but not more than that.  We expect however the theorem to be true also in the non-integer case. (See Section~\ref{seq:appendixa} for the study based on numerical simulations.)% (SEE SIMULATONS ??)

\section{Uniform bounds}
\label{seq:unibound}

\begin{prop}\label{thm:Msub1} For any integer $m\geq 3$, there exists $\bar c>0$ such that for any $n\geq 1$, $d\geq 1$, and any $t>0$ :
\begin{eqnarray}
M_n(t)-M_{n-1}(t)\leq C(m)\frac{d^m}{(d+t)^m}\left[1+\frac{\bar c}{d}\right]^{n-1}.
\end{eqnarray}
\end{prop}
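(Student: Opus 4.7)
The plan is to induct on $n$, using the recursion (\ref{recursion1}) which gives, with $D_n(t) := M_n(t) - M_{n-1}(t)$, the relation $D_n(t) = \int_0^t D_{n-1}(t-s)p(s)\,ds$ for $n \geq 2$. Writing $g(t) := d^m/(d+t)^m$ and $\alpha := 1 + \bar c/d$, the induction reduces to two ingredients: a base case $D_1(t) \leq C(m)\,g(t)$, and the convolution inequality
\begin{equation}
\label{eq:keyineqplan}
\int_0^t g(t-s)\,p(s)\,ds \leq \alpha\,g(t),\qquad t\geq 0,\ d\geq 1,
\end{equation}
for some constant $\bar c = \bar c(m)$. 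Given both, a one-line induction produces $D_n(t) \leq C(m)\,g(t)\,\alpha^{n-1}$, as required.

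For the base case, since $d^m/(d+t)^m = (1+t/d)^{-m} \geq (1+t)^{-m}$ for $d\geq 1$, it is enough to show $D_1(t) \leq C(m)(1+t)^{-m}$. With $M_0(t)=(1+t)^{-(m-1)}$ and $M_1(t)=(m-1)\int_0^t(1+t-s)^{-(m-1)}(1+s)^{-m}\,ds$, I would split the integral at $s = t/2$ and expand $(1+t-s)^{-(m-1)}$ around $s=0$ on the first half; the leading contribution there reproduces $M_0(t)$ exactly, with a remainder of order $(1+t)^{-m}$, while the $[t/2,t]$ piece is of the same order after pulling $(1+t-s)^{-(m-1)}$ out near its minimum. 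Hence $D_1(t) = O((1+t)^{-m})$ as $t\to\infty$, which together with the trivial $D_1(0) = -1$ yields a uniform $C(m)$.

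The heart of the proof is (\ref{eq:keyineqplan}), which after multiplying through by $(d+t)^m/d^m$ is equivalent to $(m-1)\int_0^t (d+t)^m[(d+t-s)(1+s)]^{-m}\,ds \leq 1 + \bar c/d$. My key device is the partial-fraction identity
\[
\frac{1}{(d+t-s)(1+s)} \;=\; \frac{1}{d+t+1}\Bigl(\frac{1}{d+t-s}+\frac{1}{1+s}\Bigr),
\]
raised to the $m$-th power and expanded binomially:
\[
\frac{1}{(d+t-s)^m(1+s)^m} \;=\; \frac{1}{(d+t+1)^m}\sum_{k=0}^m\binom{m}{k}\frac{1}{(d+t-s)^{m-k}(1+s)^k}.
\]
The $k=m$ summand integrates to $(m-1)^{-1}(1-(1+t)^{-(m-1)})$ and, combined with the prefactor $(m-1)(d+t)^m/(d+t+1)^m \leq 1$, produces exactly the leading ``$1$''. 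The $k=0$ term contributes at most $d^{-(m-1)} \leq 1/d$. For each $1\leq k\leq m-1$ I would estimate $J_{m-k,k} := \int_0^t (d+t-s)^{-(m-k)}(1+s)^{-k}\,ds$ by splitting at $s=t/2$, using $d+t-s \geq (d+t)/2$ on $[0,t/2]$ to extract $(d+t)^{-(m-k)}$, and $1+s \geq (1+t)/2$ on $[t/2,t]$ to extract $(1+t)^{-k}$; for $2\leq k\leq m-2$ the remaining one-dimensional integrals satisfy $\int(1+s)^{-k}\,ds \leq (k-1)^{-1}$ and $\int(d+t-s)^{-(m-k)}\,ds \leq ((m-k-1)d^{m-k-1})^{-1}$, giving an $O(1/d)$ contribution at once.

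The main technical obstacle will be the edge cases $k=1$ and $k=m-1$, where the residual one-dimensional integral produces a logarithmic factor $\log(1+t/2)$ or $\log(1+t/(2d))$ that does not obviously fit the $O(1/d)$ target. These are controlled by the elementary inequality $\log(1+x) \leq x$: for example, in the $k=m-1$ case, after the split the right half is bounded by $2^{m-1}\log(1+t/(2d))(1+t)^{-(m-1)} \leq 2^{m-2}t/(d(1+t)^{m-1}) \leq 2^{m-2}/(d(1+t)^{m-2}) \leq 2^{m-2}/d$, and the $k=1$ case is handled symmetrically with the roles of $d+t$ and $1+t$ swapped. Summing the contributions for $k=0,\ldots,m-1$ then yields the required bound $1 + \bar c/d$ with $\bar c=\bar c(m)$ depending only on $m$, closing the proof of (\ref{eq:keyineqplan}) and of the proposition.
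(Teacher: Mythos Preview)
Your proposal is correct and follows the same overall architecture as the paper: induction on $n$, with the convolution recursion reducing the inductive step to exactly the paper's Lemma~\ref{thm:mnminusmn1} (your inequality~(\ref{eq:keyineqplan}) is that lemma after the change of variable $s\mapsto t-s$). The difference lies only in how the two estimates are carried out. For the base case the paper performs a full partial-fraction decomposition of $(1+s)^{-(m-1)}(1+t-s)^{-m}$ and identifies the single term that reproduces $M_0(t)$, whereas you split at $t/2$ and Taylor-expand. For the convolution lemma the paper again uses the complete partial-fraction decomposition of $(d+s)^{-m}(1+t-s)^{-m}$, integrating each simple pole directly; you instead raise the elementary identity $\frac{1}{(d+t-s)(1+s)}=\frac{1}{d+t+1}\bigl(\frac{1}{d+t-s}+\frac{1}{1+s}\bigr)$ to the $m$-th power and expand binomially, which leaves cross terms that you then handle by the split-at-$t/2$ trick. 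Both routes rest on the same algebraic fact that $(d+t-s)+(1+s)=d+t+1$ is constant; your version has the small advantage of explicit positive coefficients $\binom{m}{k}$ (the paper's constants $L_j$ are left implicit), at the price of the extra splitting step for the intermediate $k$'s. One wording slip: on the $[t/2,t]$ half of the base case you should be bounding $(1+s)^{-m}$ by its supremum $\sim(1+t)^{-m}$ and integrating the remaining $(1+t-s)^{-(m-1)}$ factor, not ``pulling $(1+t-s)^{-(m-1)}$ out near its minimum'' --- but the intended estimate is clear and correct.
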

\begin{proof}
	We first prove for the case $n=1$ and then proceed by induction. First, $M_0(t)$ and $M_1(t)$ are given by
\begin{eqnarray}
M_0(t)&=&1-F(t)=\frac{1}{(1+t)^{m-1}}\non\\
M_1(t)&=&\int_0^{t}M_0(s)p(t-s)ds\non\\
&=&\int_{0}^{t}\frac{1}{(1+s)^{m-1}}\frac{(m-1)ds}{(1+t-s)^{m}}.\label{M1},
\end{eqnarray}
In order to calculate this integral, we perform a partial-fraction decomposition by viewing the denominator in the integral of (\ref{M1}) as a polynomial in $s$:
\begin{dmath}
	\label{eq:decom}
	\frac{1}{(1+s)^{m-1}(1+t-s)^{m}}=\sum_{k=1}^{m-1}\frac{a_k(t)}{(1+s)^k}+\sum_{k=1}^{m}\frac{b_k(t)}{(1+t-s)^k}
	\end{dmath}
	where 
	\begin{eqnarray}
		a_k(t)&=&\frac{1}{(m-1-k)!}\lim_{s\to -1}\frac{d^{m-1-k}}{ds^{m-1-k}}\left(\frac{1}{(1+t-s)^{m}}\right),\quad 1\leq k\leq m-1 \non\\
		b_k(t)&=& \frac{1}{(m-k)!}\lim_{s\to 1+t}\frac{d^{m-k}}{ds^{m-k}}\left(\frac{1}{(1+s)^{m-1}}\right),\quad 1\leq k\leq m.\label{akbk}
	\end{eqnarray}
And thus :
\begin{eqnarray*}
	a_k(t)&=&\frac{A_k(m)}{(2+t)^{2m-1-k}}\\
	b_k(t)&=&\frac{B_k(m)}{(2+t)^{2m-k}}.
\end{eqnarray*}
where the $A_k(m)$ and $B_k(m)$ are combinatorial factors.
Now with $A'_k(m)=(m-1) A_k(m)$ and $B'_k(m)=(m-1) B_k(m)$ we have 
\begin{equation}
	\label{expM1}
	M_1(t)=\sum_{k=1}^{m-1} \frac{A'_k(m)}{(2+t)^{2m-k}}\int_0^t \frac{ds}{(1+s)^k}+\sum_{k=1}^m \frac{B'_k(m)}{(2+t)^{2m-k-1}}\int_0^t \frac{ds}{(1+t-s)^k}
\end{equation}
 From (\ref{akbk}), we see that $B_k(m)=1$ and thus $B'_k(m)=(m-1)$.  Therefore in the second sum the term corresponding to $k=m$ is
 
	\begin{eqnarray}
	\frac{m-1}{(2+t)^{m-1}}\int_0^t \frac{ds}{(1+t-s)^m}&=&\frac{1}{(2+t)^{m-1}}(1-\frac{1}{(1+t)^{m-1}})\non\\
	&\leq & \frac{1}{(1+t)^{m-1}}=M_0(t).
\end{eqnarray}

It is easy to see that for $k>1$, the remaining integrals are all bounded by $1$. 
Thus we have :
\begin{eqnarray}
	M_1(t)\leq M_0(t)&+&\sum_{k=2}^{m-1} \frac{A'_k(m)}{(2+t)^{2m-k}}+
	\sum_{k=2}^{m-1} \frac{B'_k(m)}{(2+t)^{2m-k-1}}\non\\
	&+& \frac{A'_1(m)}{(2+t)^{2m-1}}\log(1+t)+\frac{B'_1(m)}{(2+t)^{2m-2}}\log(1+t)
	\end{eqnarray}
and because $\log(1+t)\leq (2+t)$ and $m\geq 3$ :
	
\begin{eqnarray}
	M_1(t)&\leq& M_0(t)+\frac{C(m)}{(2+t)^m}\non\\
	&\leq & M_0(t)+\frac{C(m)}{(1+t)^m}	\non\\
	&\leq & M_0(t)+C(m)\frac{d^m}{(d+t)^m}
\end{eqnarray}
for some constant $C(m)$ depending on $m$ and for any $d\geq 1$.	

Thus, we find the bound we were looking for :
\begin{equation}
M_1(t)-M_0(t) \leq C(m)\frac{d^m}{(d+t)^m}.
\label{eq:gmbound}
\end{equation}
We now proceed to prove the claim of the proposition for general $n>1$ by induction :
we see that
\begin{equation}
	M_{n}(t)-M_{n-1}(t)=\int_0^t (M_{n-1}(s)-M_{n-2}(s))p(t-s)
	\label{recursion}
\end{equation}
 and we assume that for some $\bar c>0$ :
$$
M_{n-1}(t)-M_{n-2}(t)\leq C(m)\frac{d^m}{(d+t)^m}\left[1+\frac{\bar c}{d}\right]^{n-2}
$$
for any $d\geq 1$ and any $t\geq 0$.
Then we conclude the proof of the proposition with the use of the following Lemma.

\begin{lemma} \label{thm:mnminusmn1} There exists $\bar c>0$, such that for any $d\geq 1$ and any $t\geq 0$:
	\begin{equation}
	\int^t_0 \frac{1}{(d+s)^m}\frac{(m-1)ds}{(1+t-s)^m}\leq \left[1+\frac{\bar c}{d}\right]\frac{1}{(d+t)^m}
	\end{equation}
	\end{lemma}

\end{proof}

We give now the proof of Lemma \ref{thm:mnminusmn1}.
\begin{proof} For $d\geq 1$, $t\geq 0$, let
\begin{equation}
	I_m(d,t)=\int^t_0 \frac{1}{(d+s)^m}\frac{(m-1)ds}{(1+t-s)^m}.
	\end{equation}

We first perform partial fraction decomposition in the integrand of  $I_m(d,t)$,
\begin{dmath}
	\frac{1}{(d+s)^m}\frac{m-1}{(1+t-s)^m}=\frac{(m-1)}{(1+d+t)^m}\left[\left(\frac{1}{(d+s)^m}+\frac{1}{(1+t-s)^m}\right)+\frac{L_1}{(1+d+t)}\left(\frac{1}{(d+s)^{m-1}}+\frac{1}{(1+t-s)^{m-1}}\right)+\cdots+\frac{L_{m-1}}{(1+d+t)^{m-1}}\left(\frac{1}{(d+s)}+\frac{1}{(1+t-s)}\right)\right],
\end{dmath}
where $L_i\ (i=1,2,\ldots,m-1)$ are constants. 
Integrating these terms over the interval between $s=0$ and $s=t$, we then get
\begin{dmath}
	I_m(d,t)=\frac{(m-1)}{(1+d+t)^m}\left[\frac{1}{m-1}\left(\frac{1}{d^{m-1}}+1-\frac{2}{(d+t)^{m-1}}\right)+\frac{L_1}{(1+d+t)}\frac{1}{m-2}\left(\frac{1}{d^{m-2}}+1-\frac{2}{(d+t)^{m-2}}\right)+\cdots+\frac{L_{m-1}}{(1+d+t)^{m-1}}\left({\rm log}\left(\frac{(d+t)(1+t)}{d}\right)\right)\right]\\
	\leq\frac{1}{(d+t)^m}\left[1+\frac{\bar c}{d}\right],
\end{dmath}
for some $\bar c>0$. We have used the relation $t>{\rm log}(t)$  to derive the inequality in the second line.

\end{proof}

%%% Numerical results
%%% Numerical results
%%% Numerical results

\section{Numerical study}

\subsection{Generality of Theorem \ref{thm:normconv_0}}
\label{seq:appendixa}

In this section, we numerically study the validity of Theorem \ref{thm:normconv_0} beyond its hypotheses. 
We first test if our numerical simulations capture correctly Theorem \ref{thm:normconv_0} by using Pareto distribution with an integer value exponent (with which Theorem \ref{thm:normconv_0} was proven). We then numerically study the validity of this theorem for different waiting time distributions, such as Pareto distribution with a real value exponent, inverse Rayleigh distribution and lognormal distribution.

\begin{figure}%[htbp]
	\begin{center}
		\includegraphics[clip,width=8cm]{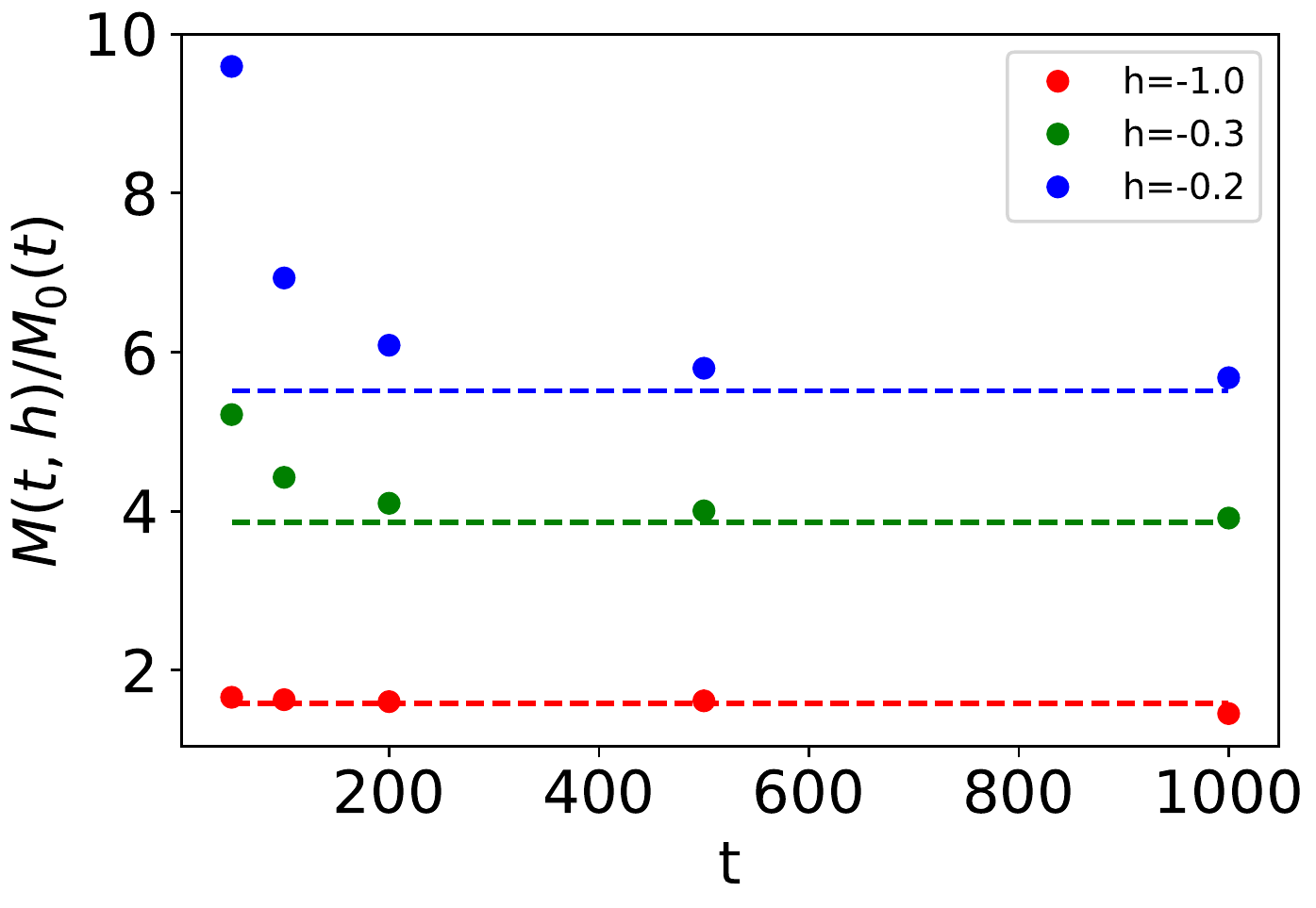}  
		\caption{ The numerical results of $M(t,h)/M_0(t)$ with the Pareto distribution ($m=3$).  Dashed lines are $1/(1-e^h)$. 
		}
		\label{fig:d}
	\end{center}
\end{figure}

Using numerical simulations, we estimate $M(t,h)$ for the Pareto distribution (with $m=3$) and divide it by $M_0(t)=1/(1+t)^{m-1}$. We plot $M(t,h)/M_0(t)$ as a function of $t$ in Fig.\ref{fig:d} together with $1/(1-e^h)$ as horizontal dashed lines. This demonstrates the reliability of our numerical simulations.

Next we consider the Pareto distribution with a real value exponent ($m=3.5$),  inverse Rayleigh distribution
\begin{eqnarray}
	p_{Ray}(t):&=&\left\{\begin{array}{ll}0 & t\leq 0\\ 
		\frac{\beta}{t^3}e^{-\frac{\beta}{2t^2}} & t > 0
	\end{array}
	\right.
	\label{eq:Rayleigh}
\end{eqnarray}
with a parameter $\beta$ and log-normal distribution
\begin{eqnarray}
	p_{log}(t):&=&\left\{\begin{array}{ll}0 & t\leq 0\\ 
		\frac{1}{\sqrt{2\pi}\sigma t}e^{-\frac{({\rm log}(t)-\mu)^2}{2\sigma^2}} & t > 0
	\end{array}
	\right.
	\label{eq:log-normal}
\end{eqnarray}
with parameters $\mu$ and $\sigma$. The corresponding cumulative distributions for the latter two distributions are  
\begin{eqnarray}
	\non\\
	F_{Ray}(t):=\mathbb P[\tau\leq t]&=&\left\{\begin{array}{ll}0 & t\leq 0\\ 
		e^{-\frac{\beta}{2t^2}}& t >0.
	\end{array}
	\right.
\end{eqnarray}
and
\begin{eqnarray}
	\non\\
	F_{log}(t):=\mathbb P[\tau\leq t]&=&\left\{\begin{array}{ll}0 & t\leq 0\\ 
		\frac{1}{2}{\rm erfc}\left[-\frac{{\rm log}(t)-\mu}{\sqrt{2}\sigma}\right]& t >0,
	\end{array}
	\right.
\end{eqnarray}
respectively.

\begin{figure}%[htbp]
	\begin{center}
		\includegraphics[clip,width=7.5cm]{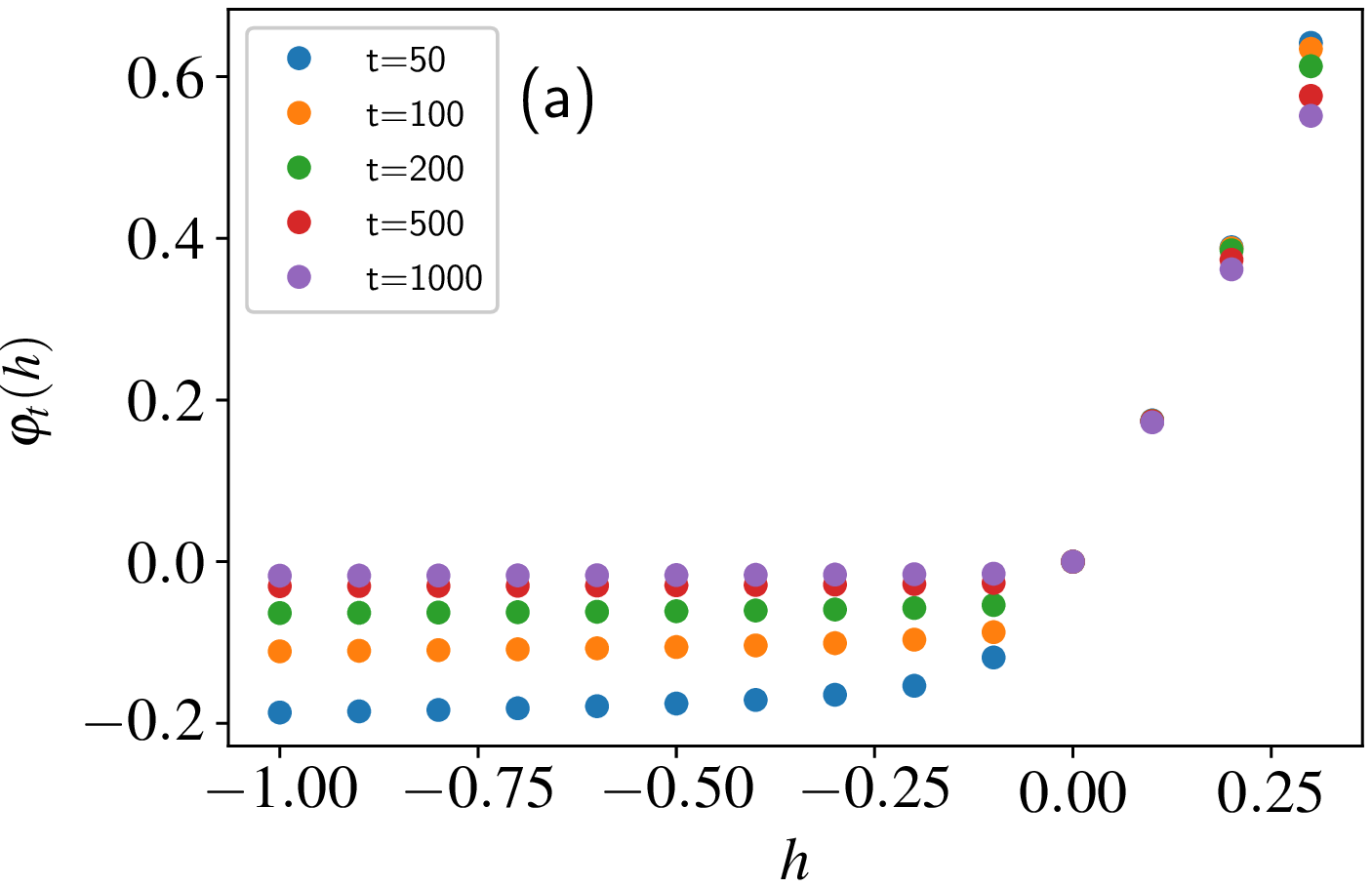}  \hspace{0.5cm}
		\includegraphics[clip,width=7.5cm]{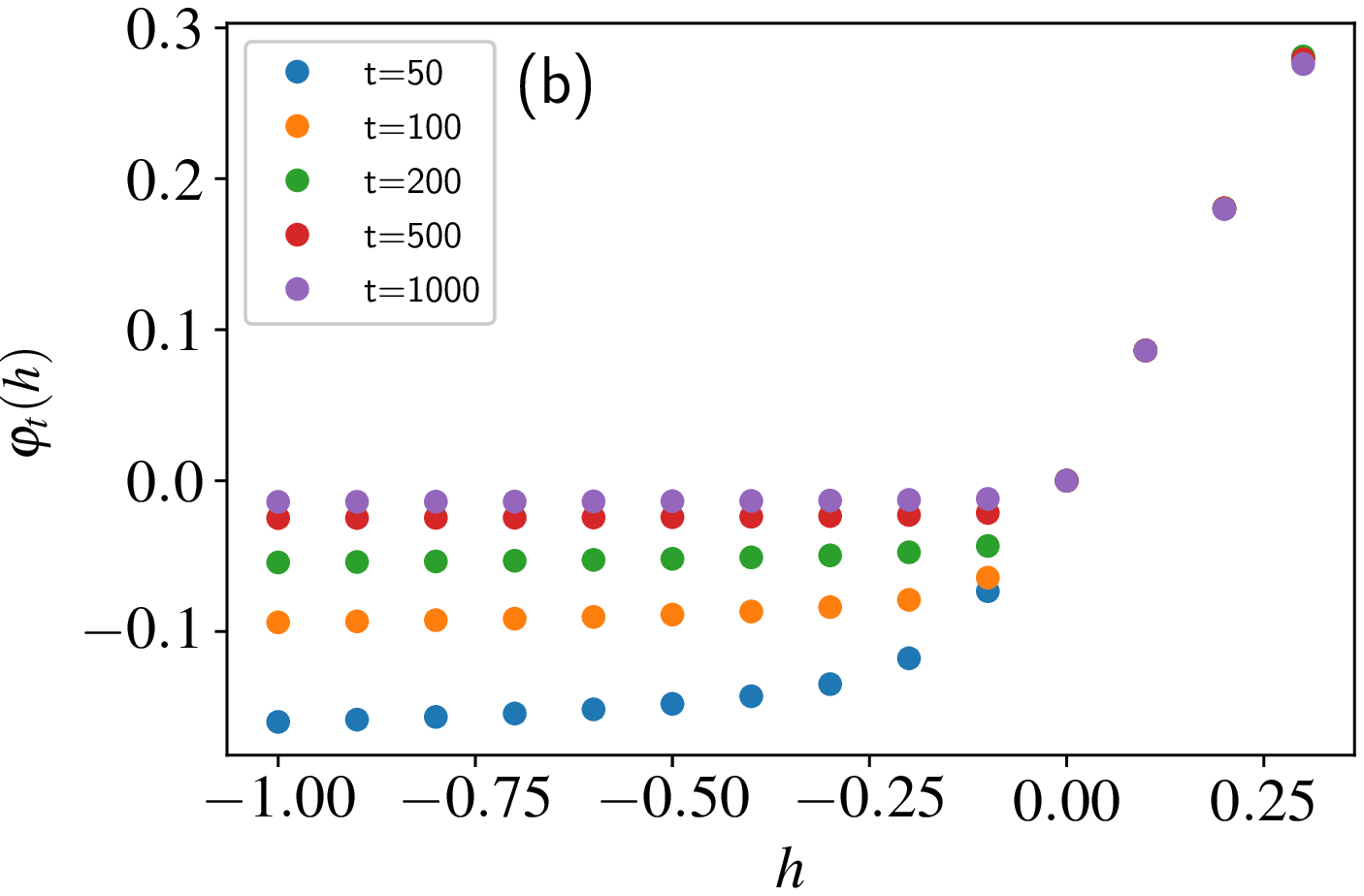}  \hspace{0.5cm}
		\includegraphics[clip,width=7.5cm]{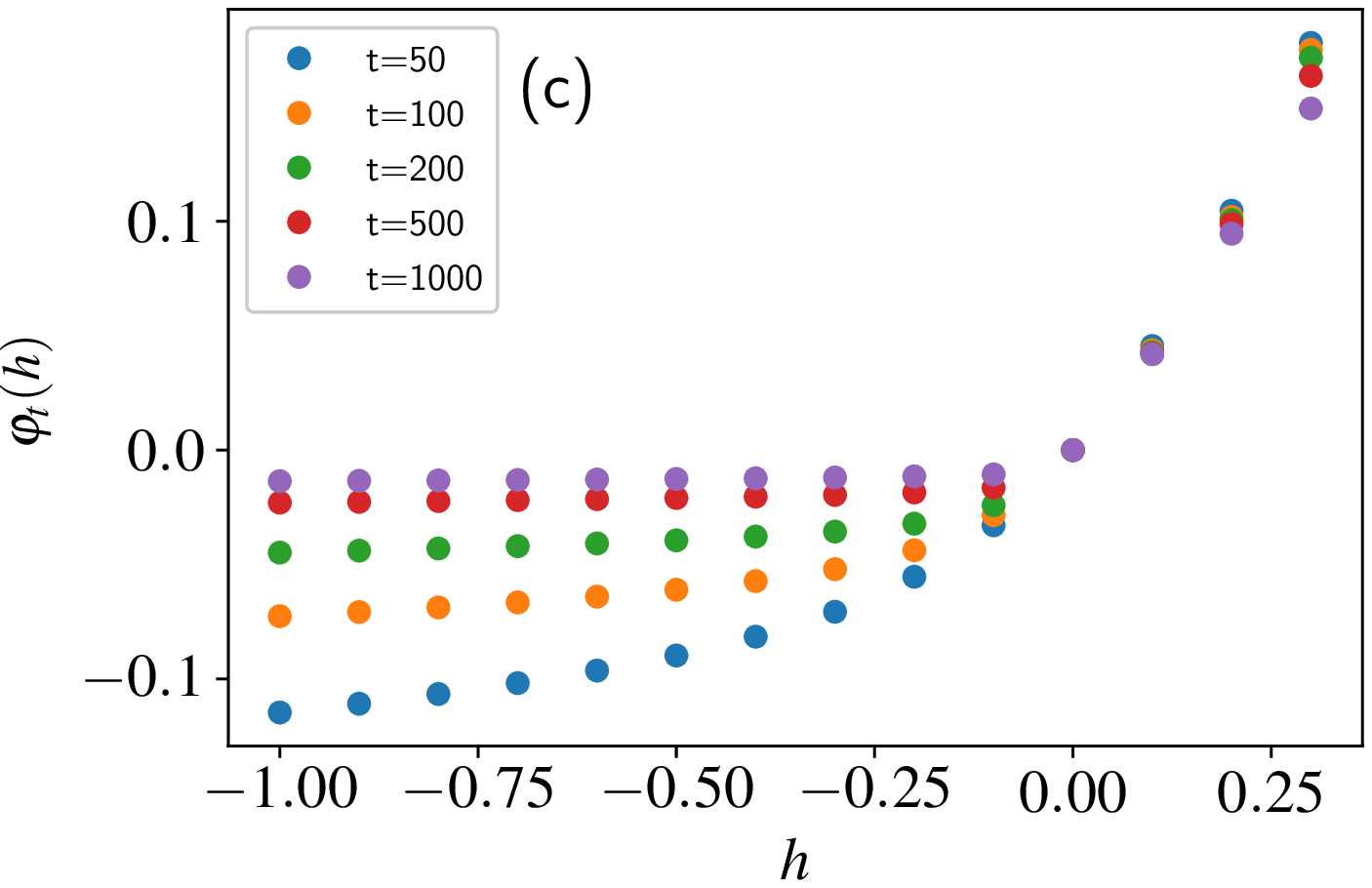}  \hspace{0.5cm}
		\caption{  Numerical results of the finite-time CGF ($(1/t) \log \mathbb E [e^{h N_t}] $) of the counting process with Pareto distribution with $m=3.5$ (a), inverse Rayleigh distribution with $\beta = 1$ (b) and log-normal waiting-time distribution with $\mu=0$ and $\sigma=1.5$ (c). We observe the emergence of the affine part when $h$ is negative. 
		}
		\label{fig:e}
		\vspace{-0.5cm}
	\end{center}
\end{figure}

We perform numerical simulations with these waiting time distributions and  plot in Fig.\ref{fig:e} the finite-time cumulant generating function (CGF) of the counting process $N_t$, defined as 
\begin{equation}
\varphi_t(h) : = \frac{1}{t} \log M(t,h).
\end{equation}
 The figure shows the emergence of the affine part for the negative $h$, which means that $M(t,h)$ decreases sub-exponentially. 
 Theorem \ref{thm:normconv} states that, when the waiting time distribution is the Pareto distribution with an integer exponent $m$, this sub-exponential decrease is proportional with $M_0(t)$ (where $M_0(t)=1-F(t)$) with a coefficient $1/(1-e^h)$.  To study if the same statement  is satisfied with the various waiting-time distributions introduced above, we then plot $M(t,h)/M_0(t)$ (where $M_0(t)=1-F(t)$) in Fig.\ref{fig:theorem2} as a function of $t$. 
In the figure, $M(t,h)/M_0(t)$ seems to converge to $1/(1-e^h)$ for Pareto distribution ($m=3.5$) and inverse Rayleigh distribution, while it converges to a value close to $1/(1-e^h)$ for lognormal distribution. We note that the Pareto and inverse Rayleigh distribution are both regularly varying at infinity while the log-normal is not. It is an interesting future perspective to quantitatively understand these convergences and prove the corresponding Theorem \ref{thm:normconv_0} for these waiting time distributions.

\begin{figure}%[htbp]
	\begin{center}
		\includegraphics[clip,width=7.cm]{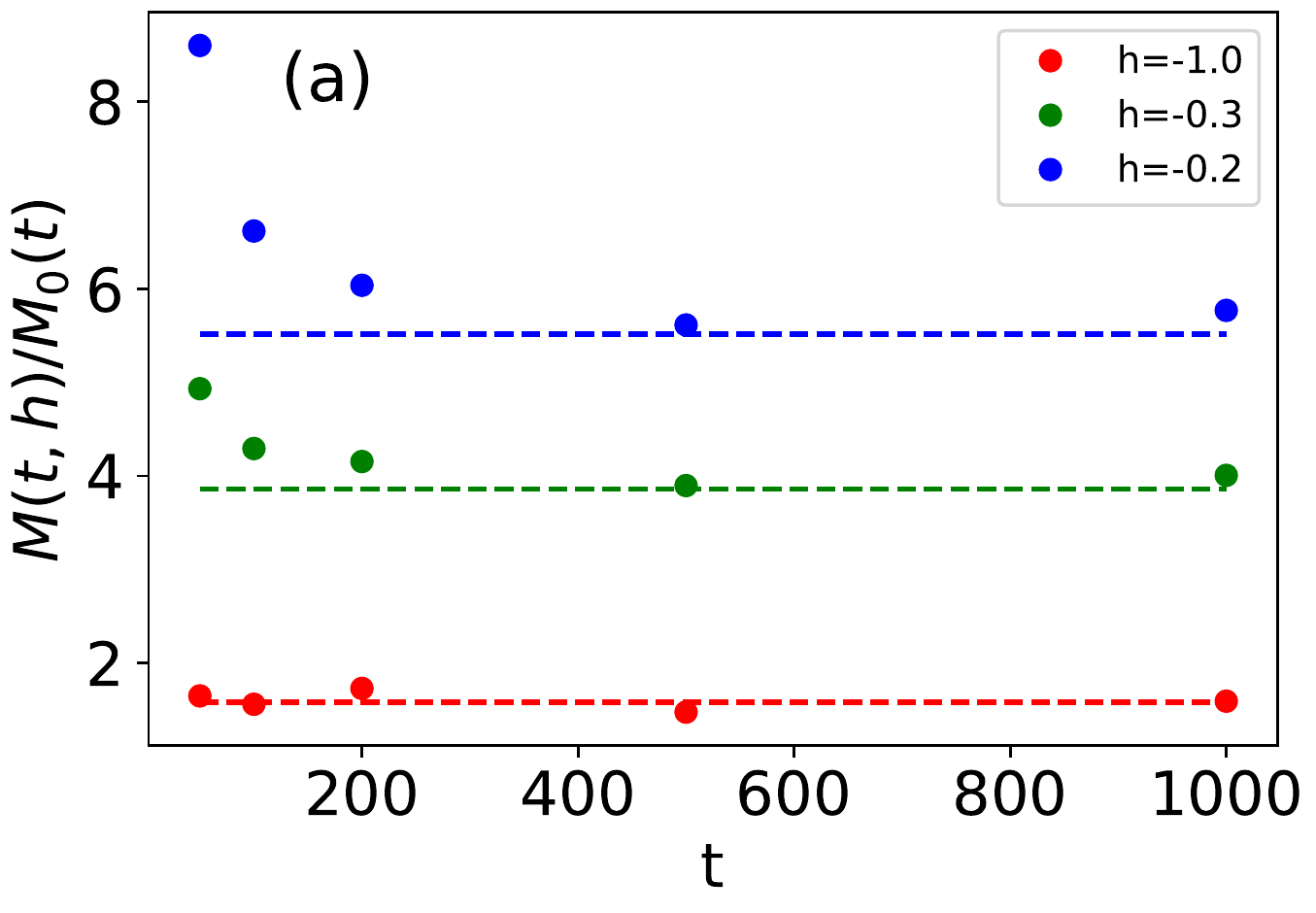}  \hspace{0.5cm}
		\includegraphics[clip,width=7.5cm]{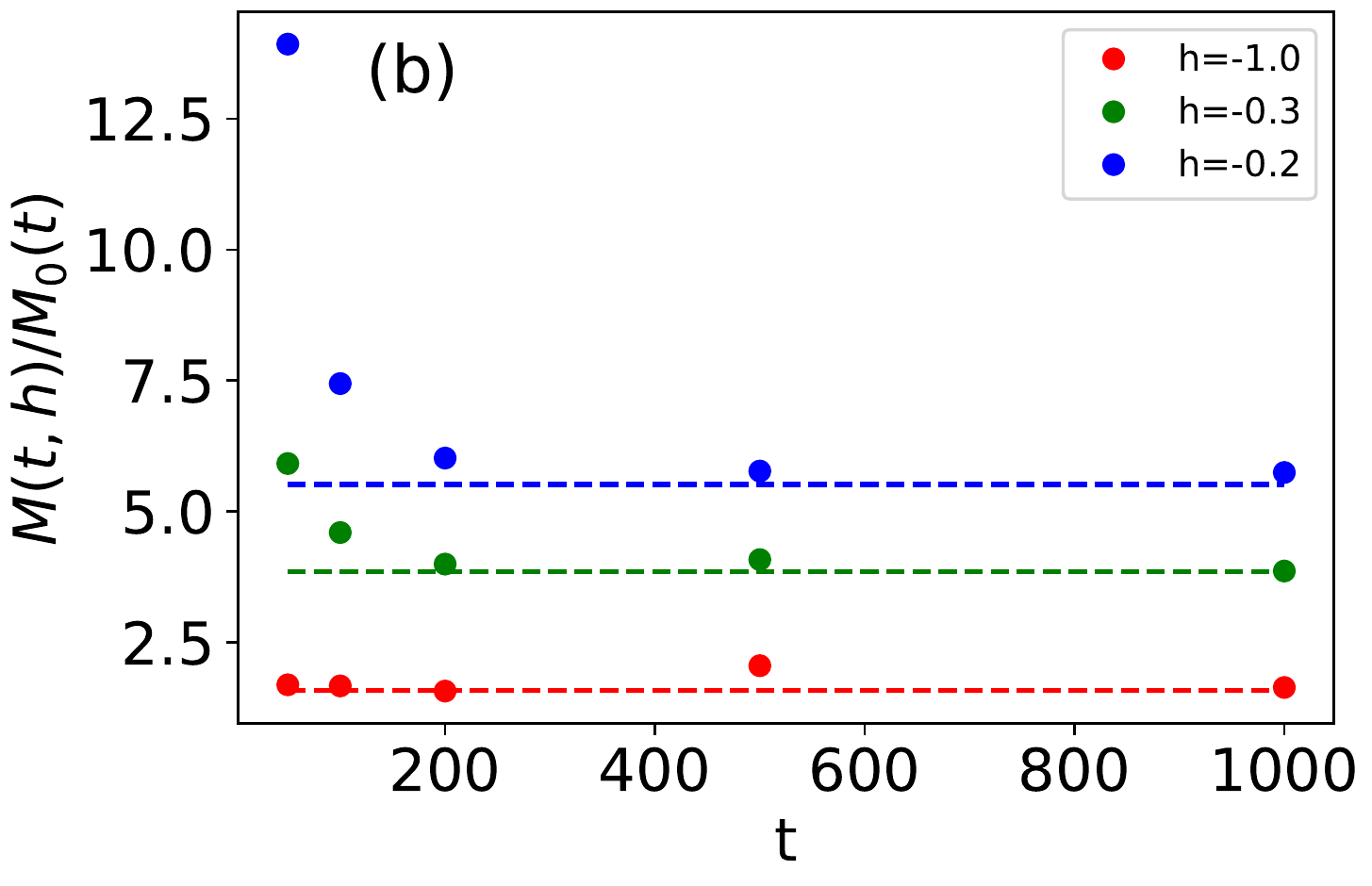}  \hspace{0.5cm}
		\includegraphics[clip,width=7.5cm]{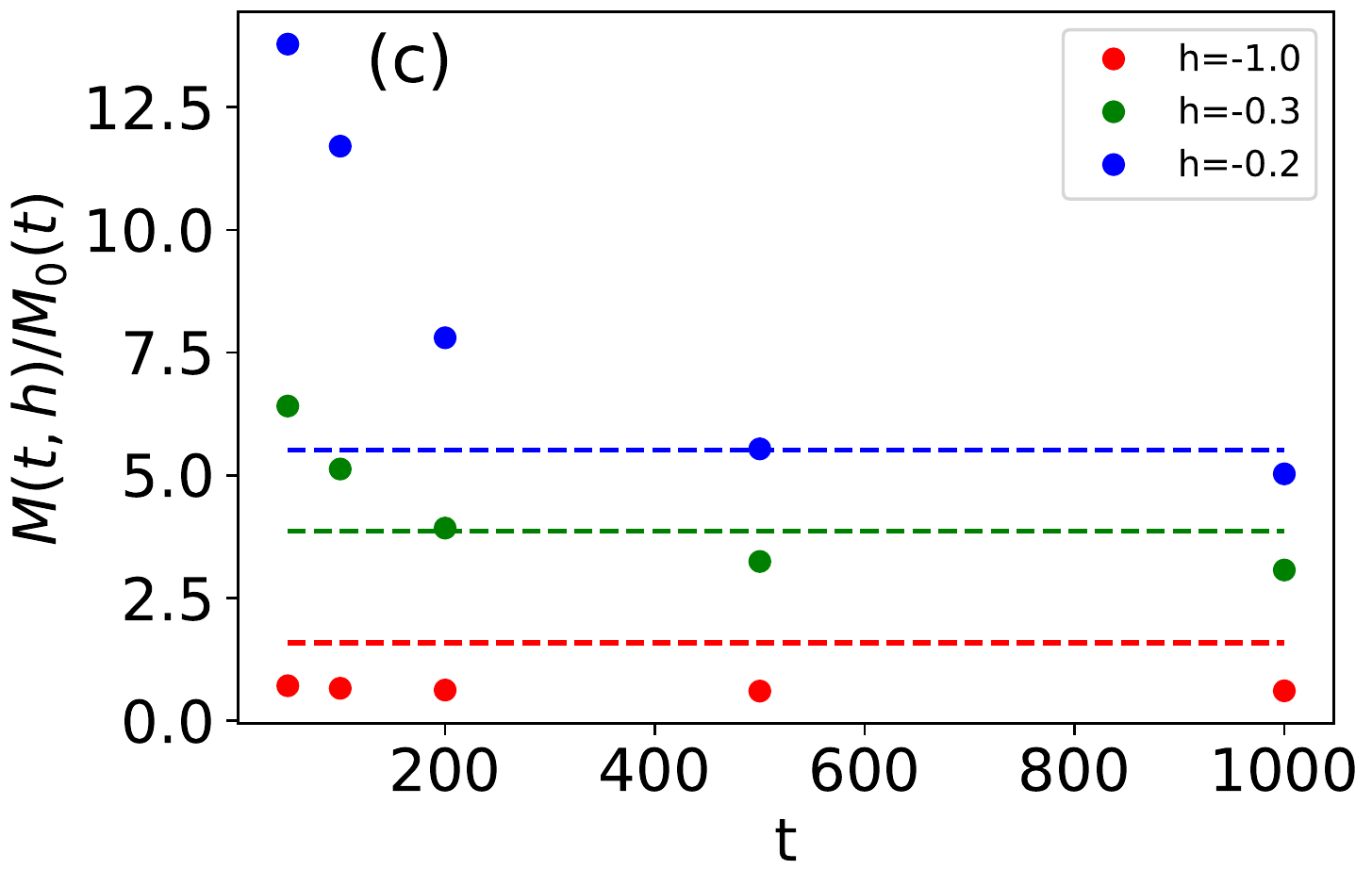}  \hspace{0.5cm}
		\caption{ $M(t,h)/M_0(t)$ (filled circles) together with $1/(1-e^h)$ (dashed lines) for Pareto distribution with $m=3.5$ (a), inverse Rayleigh distribution with $\beta = 1$ (b) and log-normal waiting-time distribution with $\mu=0$ and $\sigma=1.5$ (c). 
		}
		\label{fig:theorem2}
		\vspace{-0.5cm}
	\end{center}
\end{figure}

%%%%%%%%%%%%%%%%
%%%%%%%%%%%%%%%%
%%%%%%%%%%%%%%%%

\subsection{Study of the rate function}
\label{sec:numerical_ratefunction}

In this section, we numerically study the asymptotic behaviour of the finite-time rate function $i_t(x)$ 
\begin{equation}
i_t(x) := -\frac{1}{t} \log \mathbb P \left (\frac {N_t}{t}\simeq x \right ).
\end{equation}
We first plot $i_t(x)-\min_x i_t(x)$ for several $t$ with different waiting-time distributions (introduced in the previous sub-section) in Fig.~\ref{fig:ratefunction}. The figure indicates the emergence of the affine part, which manifests anomalously large probability of rare fluctuations where $N_t/t$ takes smaller values than the expectation. To study the finite-time asymptotic, we then plot $\log \mathbb P[N_t<xt]$ (for a fixed $x$) as a function of $t$ in Fig.~\ref{fig:ratefunction_2}. We observe the asymptotic behaviour of $\log \mathbb P[N_t<xt]$ as
\begin{equation}
\log \mathbb P[N_t<xt]  \sim a \log M_0(t) + \log(t) + b
\label{eq:numerical_observation}
\end{equation}
with constants $a,b$ (that can potentially depend on $x$). For Pareto and the inverse Rayleigh waiting time distributions, $a$ seems to be 1, while $a$ is different from 1 for the log-normal waiting-time distribution.

\begin{figure}[htbp]
	\begin{center}
		\includegraphics[clip,width=7.5cm]{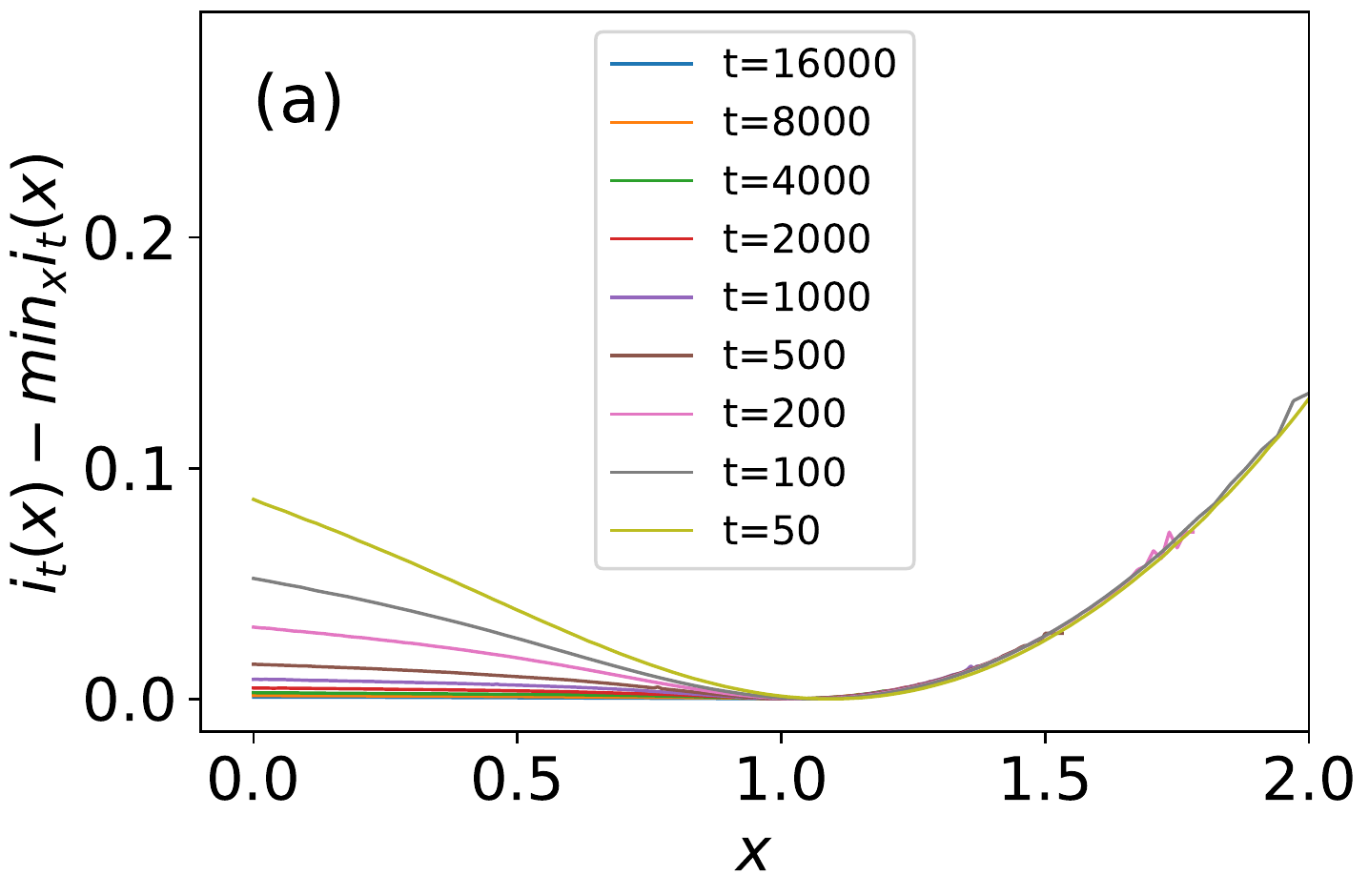}  \hspace{0.5cm}
		\includegraphics[clip,width=7.5cm]{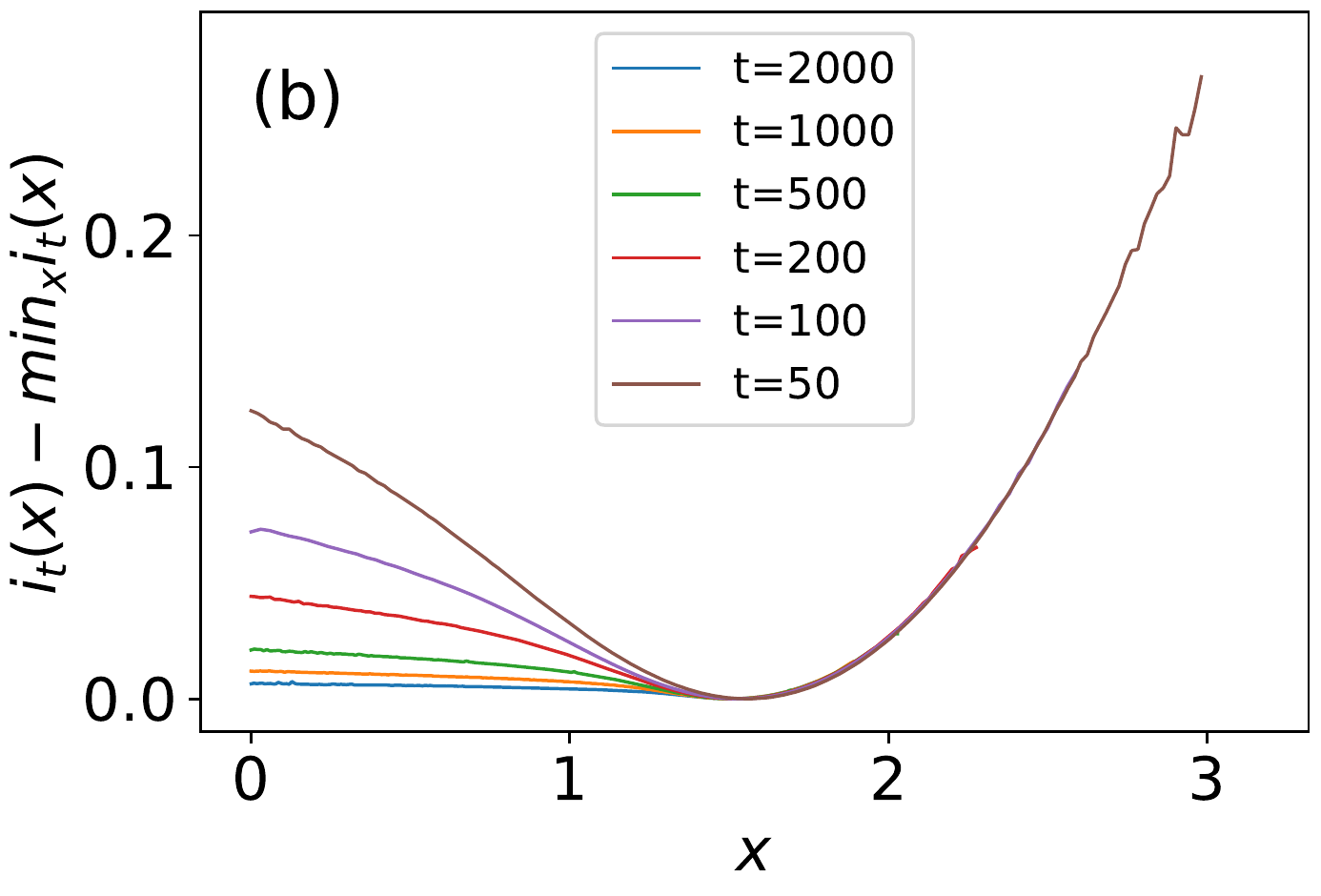}  \hspace{0.5cm}
		\includegraphics[clip,width=7.5cm]{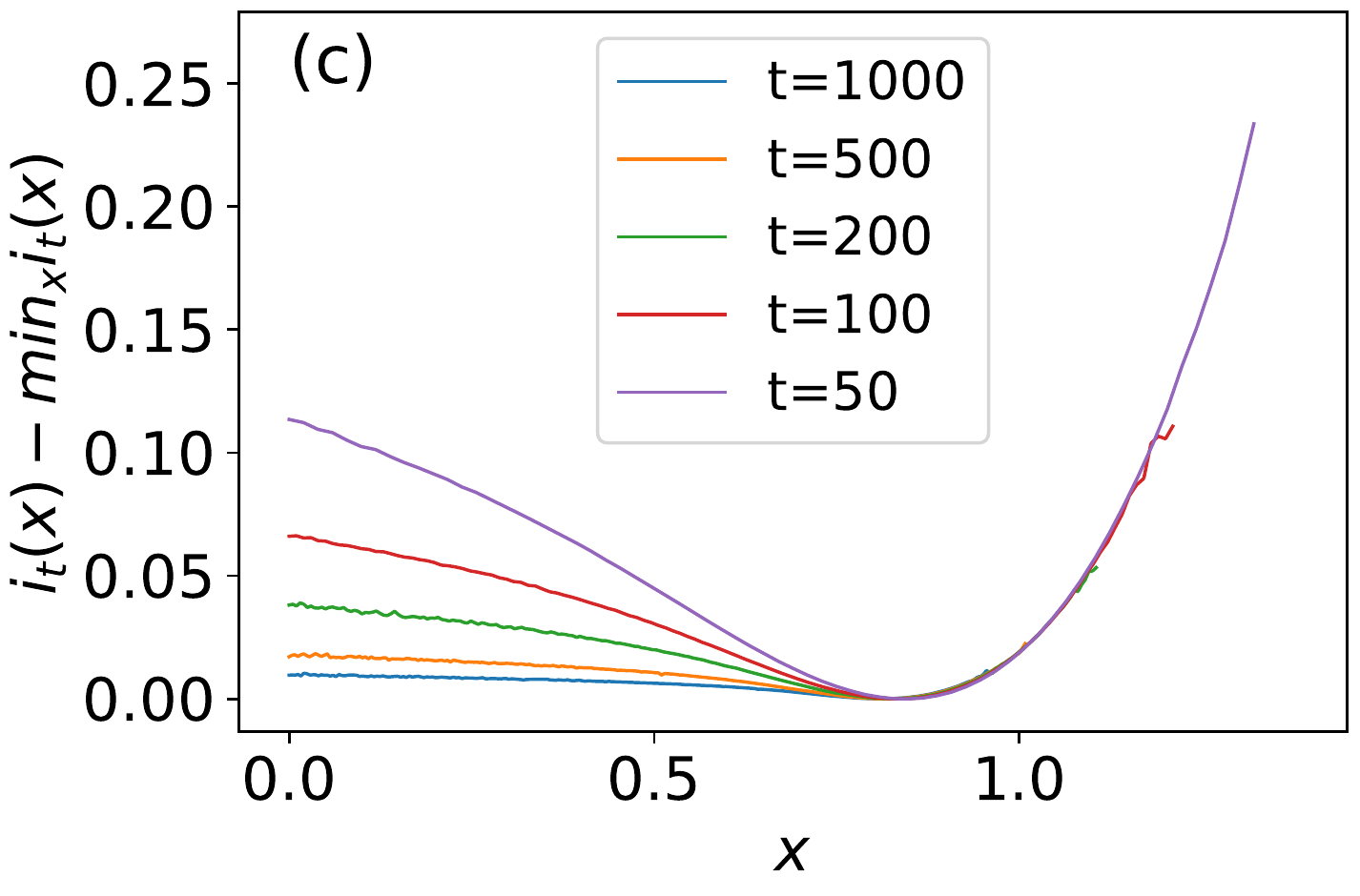}  \hspace{0.5cm}
		\includegraphics[clip,width=7.5cm]{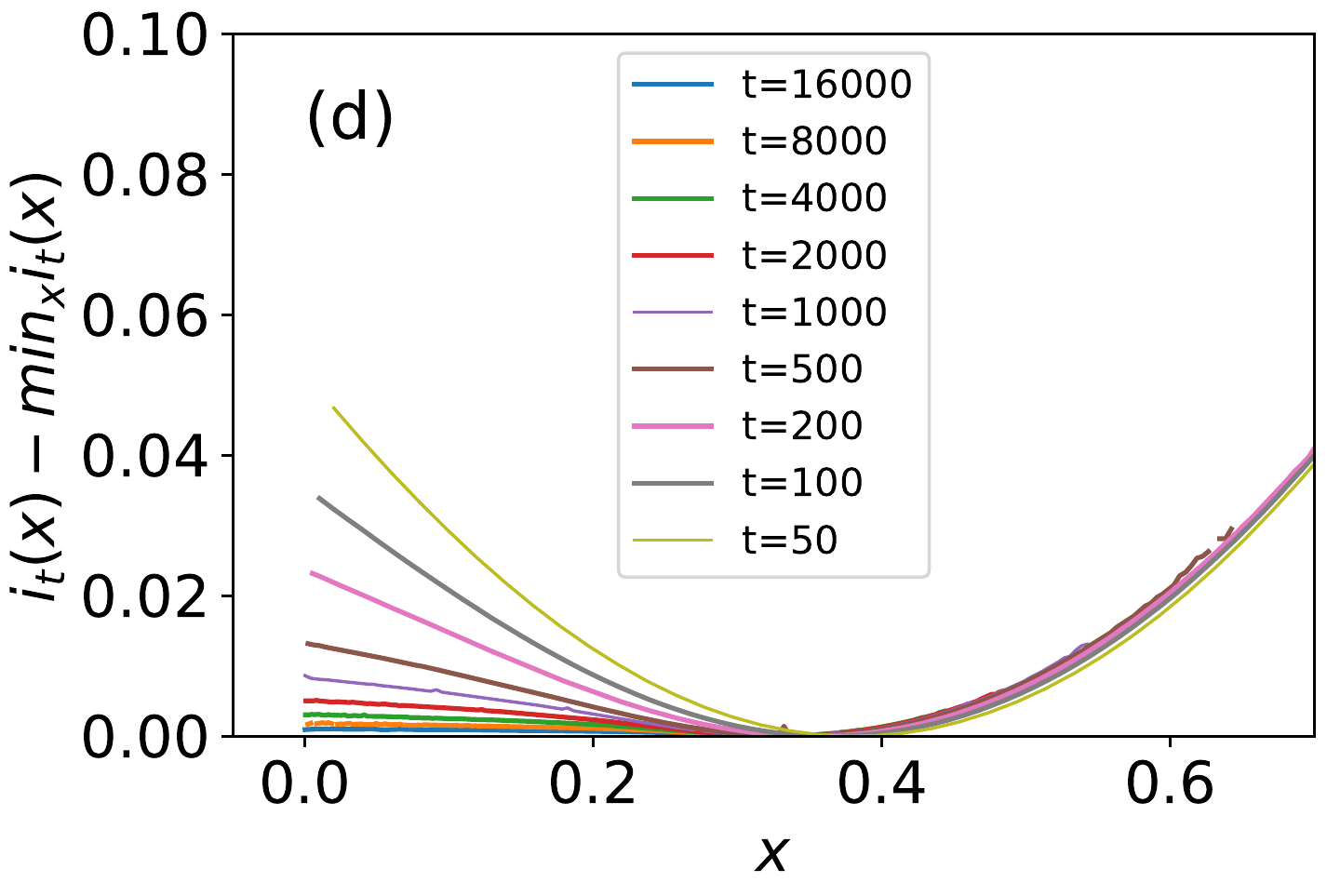}  \hspace{0.5cm}
		\caption{ The finite time rate function $i_t(x)-\min_x i_t(x)$ for several $t$ with different waiting time distributions: (a) Pareto distribution with $m=3$, (b) Pareto distribution with $m=3.5$, (c), inverse Rayleigh distribution with $\beta = 1$ and (d) the log-normal waiting-time distribution with $\mu=0$ and $\sigma=1.5$. In all  panels,  the affine part emerges as $t$ increases. 
		}
		\label{fig:ratefunction}
		\vspace{-0.5cm}
	\end{center}
\end{figure}

Mathematically justifying this asymptotic expression is an open problem. For example, we can immediately derive a lower bound for $\mathbb P[N_t < xt]$ for any $x> 0$ as
	\begin{equation}
		\mathbb P[N_t < xt]\geq \mathbb P[N_t=0]= \mathbb P[\tau_1>t]=M_0(t),
	\end{equation}
which is consistent with the observation. For the upper bound it is natural to use the relation
\begin{equation}
	\mathbb P[N_t<xt]=\mathbb P[S_{\lfloor xt \rfloor}>t]
\end{equation}
and the bound (\ref{Snbound}) with replacing $n$ by $\lfloor xt \rfloor$. This leads to
\begin{equation}
	\mathbb P[S_n\geq t]
	\leq n M_0(t)+n^2 C_m\frac{d^m}{(d+t)^m}\left[1+\frac{\bar c}{d}\right]^{n-1}
\end{equation}
valid for any $d\geq 1$. Unfortunately the bound provided by this inequality (and in particular its second term) is not strong enough to derive a meaningful asymptotic expression.  Indeed, it is easy to see that when $t\to\infty$ the second term diverges even if one takes $d\to\infty$ first. 
Interestingly, the logarithm of the first term yields
\begin{equation}
\log(\lfloor xt\rfloor M_0(t))\sim \log M_0(t) + \log(x t) 
\end{equation}
as $t\to\infty$, which is coincidentally the same as the observed expression (\ref{eq:numerical_observation}) for Pareto and inverse Rayleigh waiting time distributions. This implies that refining the bound (\ref{Snbound}) could be the key to derive (\ref{eq:numerical_observation}) at least for these waiting time distributions. Pursuing this direction is out of scope in the current manuscript but an interesting future perspective.  

\begin{figure}[htbp]
	\begin{center}
		\includegraphics[clip,width=7.5cm]{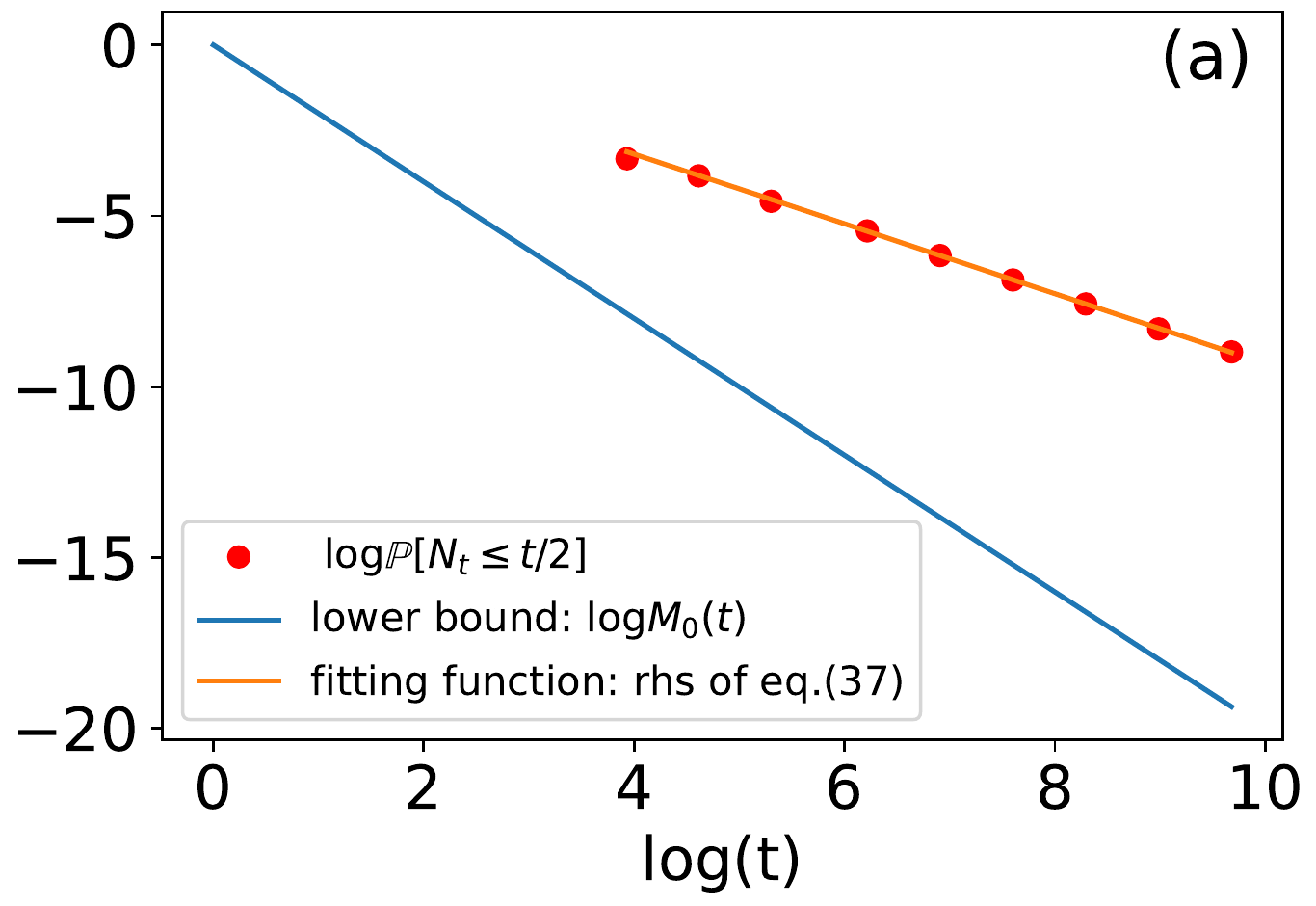}  \hspace{0.5cm}
		\includegraphics[clip,width=7.5cm]{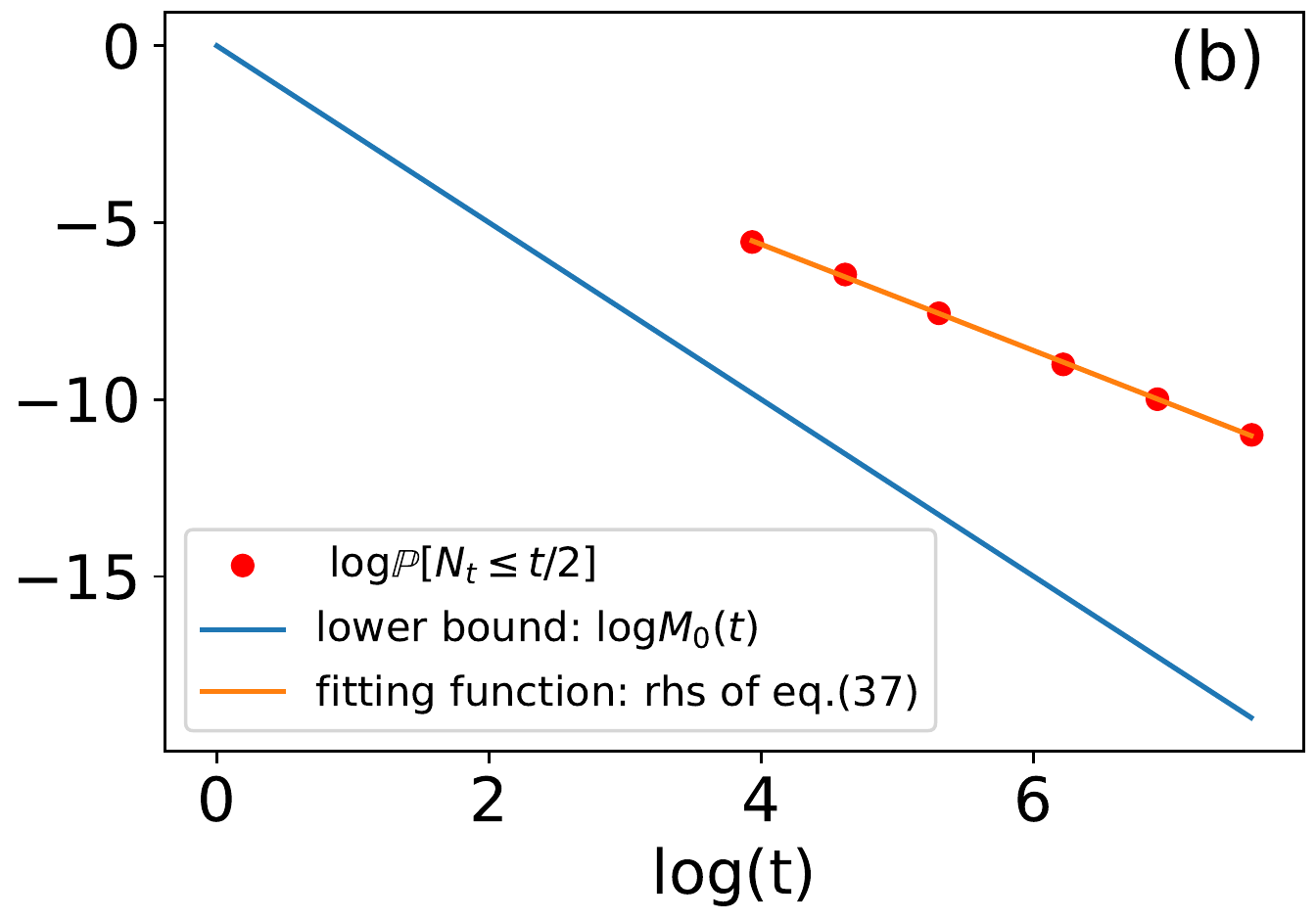}  \hspace{0.5cm}
		\includegraphics[clip,width=7.5cm]{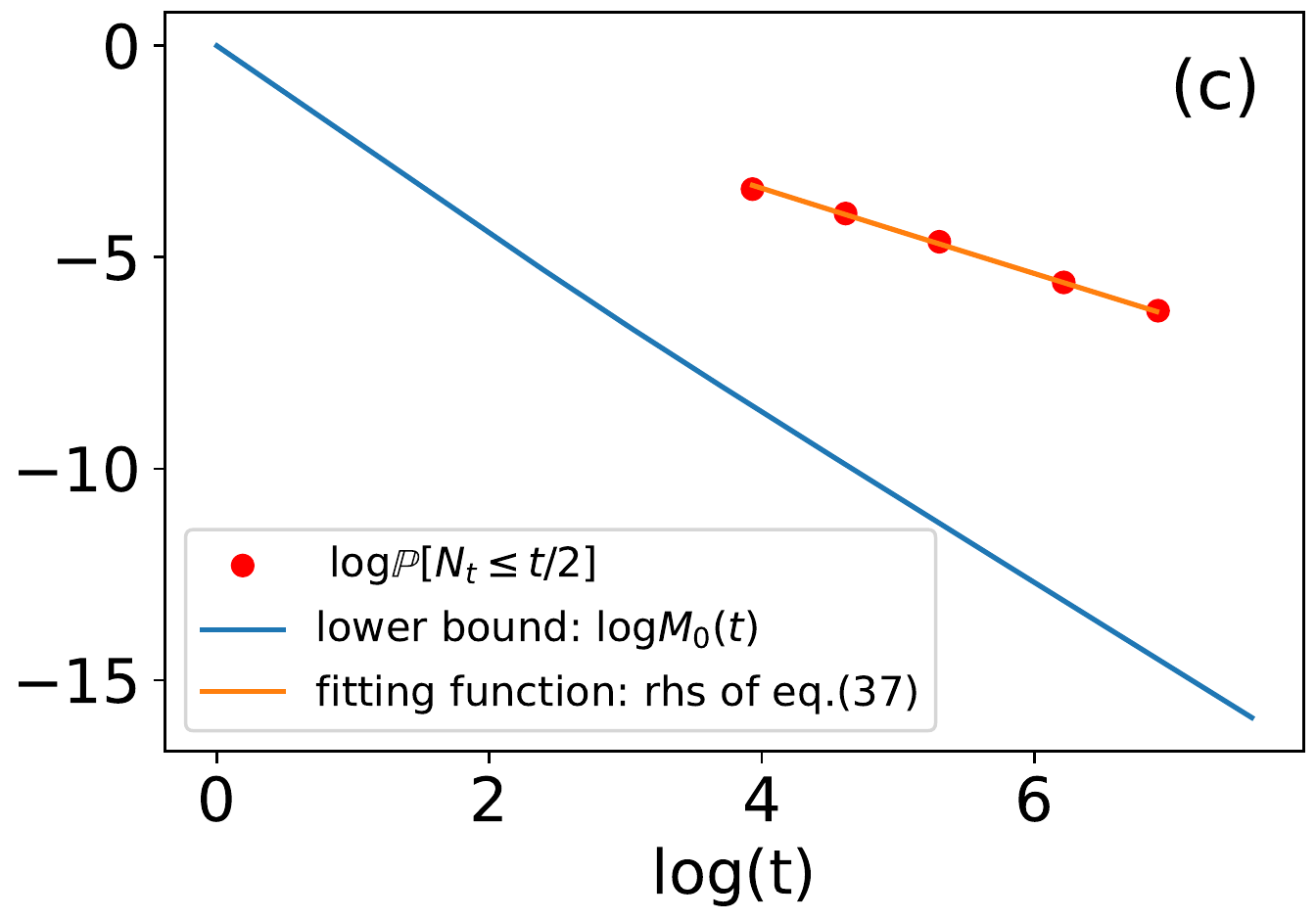}  \hspace{0.5cm}
		\includegraphics[clip,width=7.5cm]{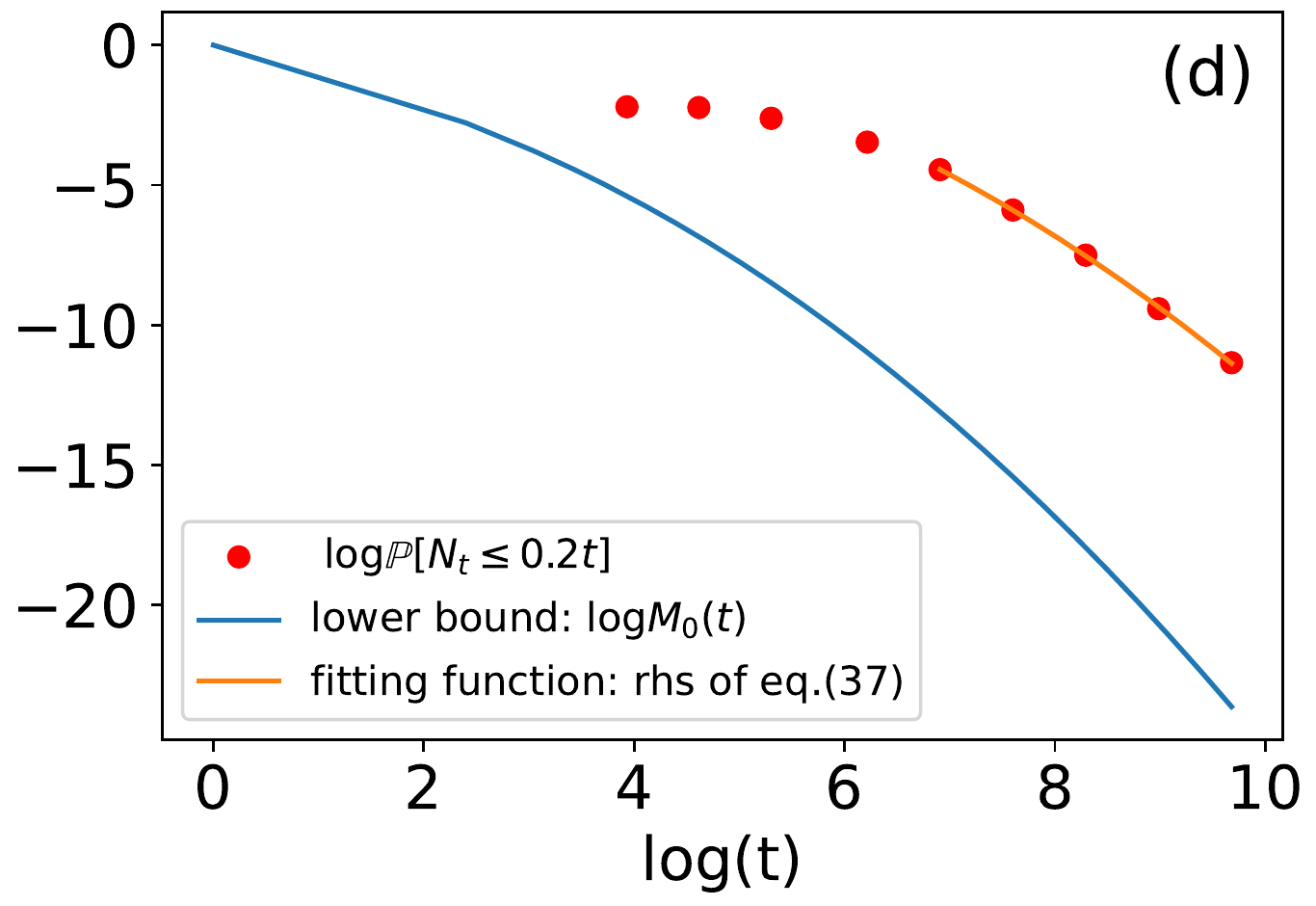}  \hspace{0.5cm}
		\caption{ $\log \mathbb P[N_t<xt]$ for several waiting time distributions: (a) Pareto distribution with $m=3$, (b) Pareto distribution with $m=3.5$, (c), inverse Rayleigh distribution with $\beta = 1$ and (d) the log-normal waiting-time distribution with $\mu=0$ and $\sigma=1.5$. The lower bound $\log M_0(t)$ and a fitting function $a \log M_0(t) + \log (t) + b$ (with fitting parameters $a,b$) are also plotted. These fitting parameters are determined as $a=1.01, b=0.92$ for (a), $a=1.00, b=0.43$ for (b), $a=1.00, b=2.00$ for (c) and $a=0.89, b=2.51$ for (d). 
		}
		\label{fig:ratefunction_2}
		\vspace{-0.5cm}
	\end{center}
\end{figure}

\section{Conclusion}
\label{seq:conclusion}

In this article, we studied the finite-time asymptotic of the MGF in a counting process $N_t$ with  Pareto distributions \eqref{eq:pareto}. To this end, we applied the result (\ref{feller1}) and we proved an explicit expression of the bounds for the finite-time MGF (Theorem \ref{thm:normconv}) using an expansion approach. The method to prove these expressions could be applied to more general cases, as the validity of the relation (\ref{feller1}) is not restricted to Pareto distribution. Also we expect that the bound of Theorem \ref{thm:normconv} may be extended beyond the case where one can perform a partial fraction decomposition to estimate the integrals.  The same affine part has been  observed in heat currents with the inverse Rayleigh distribution \cite{b14} and in the counting process with  lognormal distribution (Fig.\ref{fig:e}). Similar finite-time asymptotics for the MGF are  anticipated  as numerically demonstrated in Section~\ref{seq:appendixa}. Studying how the methods of this article can be generalised in these cases would be an interesting future perspective.

In physics, there have been tremendous efforts to understand and characterise singularities appearing in large deviation functions (rate functions and CGFs). In equilibrium statistical physics,  these singularities correspond to phase transitions because the large deviation functions are the thermodynamic functions (\cite{callen1998thermodynamics, ellis2006entropy,  b4, jona2015large, bertini2015macroscopic} for instance). Studying large deviation functions of time-averaged quantities in non-equilibrium systems have attracted a strong attention of statistical physicists since the discovery of fluctuation theorem in 1993 \cite{evans1993probability,kurchan1998fluctuation, lebowitz1999gallavotti,b4,b7}. The singularities of the large deviation functions are related to dynamical phase transitions and are studied in lattice gas models \cite{bodineau2005distribution, appert2008universal, bodineau2008long, bodineau2012finite, baek2017dynamical, shpielberg2017geometrical, shpielberg2018universality},  high-dimensional chaotic dynamics \cite{tailleur2007probing, laffargue2013large, bouchet2014stochastic}, glass formers \cite{hedges2009dynamic, garrahan2009first, jack2010large, pitard2011dynamic, limmer2014theory, nemoto2017finite, speck2012first}, diffusive hydrodynamic equations \cite{bertini2005current, hurtado2011spontaneous, tizon2017structure} and active matters \cite{vaikuntanathan2014dynamic, cagnetta2017large, whitelam2018phase, nemoto2019optimizing}. In these studies, the dynamics are usually defined as a Markov process (or deterministic process), so that the singularity does not appear whenever finite size systems are considered (see \cite{jack2020dynamical} for an illustrative example). In our case, the singularity in the large deviation functions is already present for a finite-size system because of the long-range correlation in time. Studying how this difference  can alter the well-studied dynamical phase transitions in physics context would be an interesting perspective, in the similar way that heat conductions in aerogels were  studied using renewal-reward processes \cite{b14}.

\section*{Acknowledgement}
H.H. is in the Cofund MathInParis PhD project that has received funding from the European Union’s Horizon 2020 research and innovation programme under the Marie Sk lodowska-Curie grant agreement No.~754362.~\includegraphics[width=0.7cm]{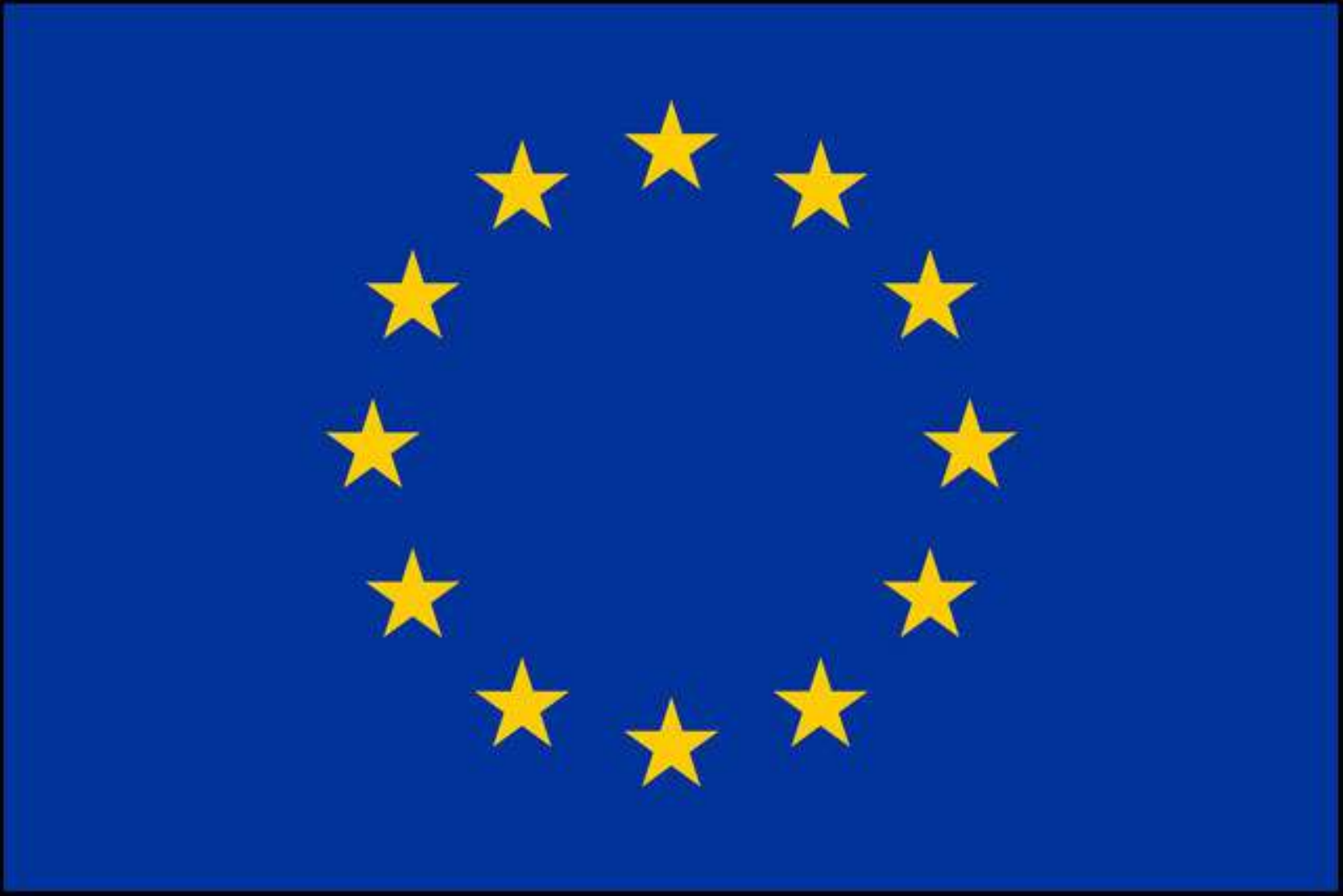}
R.L. is supported by the ANR-15-CE40-0020-01 grant LSD. The authors thank  Robert Jack  for fruitful discussions about finite-time CGFs and rate functions. H.H. also acknowledges the support from Robert Jack during his stay in the University of Cambridge for his mobility project.

\bibliographystyle{plain_url}
\bibliography{draft.bib}

\end{document}